\documentclass[11pt,a4paper]{article}
\usepackage{amsmath,amssymb,amsthm,geometry}
\usepackage{mathrsfs}
\newcommand{\HH}{\mathbb{H}}

\newcommand{\RR}{\mathbb{R}}

\newcommand{\dLam}{d\Lambda_{\HH}}

\newcommand{\pB}{{\partial \mathbb{B}_{\mathbb{H}}^n}}
\newcommand{\cB}{{\overline{\mathbb{B}_{\mathbb{H}}^n}}}

\DeclareMathOperator{\arctanh}{arctanh}

\DeclareMathOperator{\Vol}{Vol}

\newcommand{\BH}{\mathbb{B}_{\mathbb{H}}^n}

\newcommand{\br}{\mathbb{R}}
\newcommand{\bh}{\mathbb{H}}

\newtheorem{theorem}{Theorem}[section]
\newtheorem{definition}{Definition}[section]
\newtheorem{lemma}{Lemma}[section]
\newtheorem{proposition}{Proposition}[section]
\newtheorem{corollary}{Corollary}[section]
\newtheorem{remark}{Remark}[section]
\newtheorem{example}{Example}[section]

\theoremstyle{definition}

\title{Conformal Barycenters in Quaternionic Hyperbolic Balls}
\author{ Wensheng Cao  and   Zhijian Ge}
\date{}

\begin{document}
	\maketitle
	
	\begin{abstract}
		We extend the notion of conformal barycenter, recently introduced by
		Ja\v{c}imovi\'{c} and Kalaj for the complex hyperbolic ball, to the
		quaternionic unit ball $\BH$.  The quaternionic
		conformal barycenter of a measurable set $D$ with finite hyperbolic
		measure and finite first moment is defined as the unique point $c$ such that
		$\int_D \Phi_c(q)\, \dLam(q) = \mathbf{0}$, where
		$\Phi_c$ is the quaternionic Hua involution exchanging $0$ and $c$.
		Equivalently, it is the unique minimum of the energy functional
		$G(x) = \int_D \log\cosh^2\!\big(\frac12 d_H(x,y)\big)\, \dLam(y)$.
		We prove existence and uniqueness using the strict geodesic convexity
		of $G$, which is established by a direct computation along geodesics.
		The barycenter is invariant under
		the full isometry group $\mathrm{Sp}(n,1)$.  We also treat finite
		point sets and provide explicit examples.
	\end{abstract}
	
	\medskip
	\noindent\textbf{2020 Mathematics Subject Classification:} 51M10, 53C35, 15B33.
	
	\medskip
	\noindent\textbf{Keywords:} quaternionic hyperbolic space, conformal barycenter,
	Hua transformation, Bergman metric, geodesic convexity.
	
	\section{Introduction}
	
	The barycenter, or center of mass, is a fundamental notion in
	Euclidean geometry.  For a measurable set $D \subset \mathbb{R}^m$
	with finite Lebesgue measure, the Euclidean barycenter is simply
	the average $\frac{1}{|D|} \int_D x \, dx$.  This definition relies
	crucially on the linear structure of Euclidean space and is not
	invariant under the conformal transformations of the sphere or
	the hyperbolic ball.
	
	In a seminal 1986 paper, Douady and Earle \cite{DE86} introduced
	the notion of a conformal barycenter for probability measures on
	the circle $S^1$.  With each measure they associated a vector field
	on the unit disk, and the unique zero of this field was called the
	conformal barycenter.  This construction was conformally invariant
	and led to a canonically defined extension of quasi-symmetric
	homeomorphisms of the circle.
	
	Recently, Ja\v{c}imovi\'{c} and Kalaj \cite{JaKal25} introduced
	the notions of conformal and holomorphic barycenters for measurable
	sets inside the hyperbolic ball.  For the Poincar\'{e} ball
	$\mathbb{B}^n \subset \mathbb{R}^n$, the conformal barycenter is the
	unique point $c$ such that $\int_D h_c(x)\, d\Lambda(x) = 0$, where
	$h_c$ is the M\"obius involution exchanging $0$ and $c$, and
	$d\Lambda$ is the hyperbolic volume measure.  For the Bergman ball
	$\mathbb{B}_m \subset \mathbb{C}^m$, the analogous holomorphic
	barycenter uses the involutive automorphism $p_c$ and the
	Bergman metric.  These two notions coincide in the complex disk
	($m=1$, real dimension $2$) but differ in higher dimensions.
	
	In the present paper, we extend this construction to the quaternionic
	unit ball $\BH$.  Quaternionic hyperbolic
	geometry has been intensively studied; see Chen--Greenberg \cite{chen},
	Kim--Parker \cite{kimp}, and the authors' works on volumes and
	discrete groups \cite{CG2011,cao-congruence,cao16,caofu}.  The symmetry group is the
	non-compact symplectic group $\mathrm{Sp}(n,1)$, which acts by
	isometries of the Bergman metric.  Due to the non-commutativity of
	the quaternions, some formulas require more careful treatment than
	in the complex case, but the essential geometric structure, namely
	negative sectional curvature, the existence of involutive
	isometries $\Phi_u$, and the strict convexity of the distance
	function, carries over perfectly.
	
	Our main result is:
	
	\begin{theorem}[Main Theorem]\label{thm:main}
		Let $D \subseteq \BH$ be measurable with
		$0 < \Lambda_{\HH}(D) < \infty$ and $\int_D d_H(0,y)\,\dLam(y) < \infty$.
		\begin{enumerate}
			\item[(i)] There exists a unique point $c = c(D) \in \BH$, called the
			\emph{quaternionic conformal barycenter} of $D$, such that
			\[
			\int_D \Phi_c(q) \; \dLam(q) = \mathbf{0}.
			\]
			\item[(ii)] The point $c$ is the unique global minimum of the
			energy functional
			\[
			G(x) = \int_D \log\cosh^2\!\Bigl(\frac12 d_H(x,y)\Bigr)
			\; \dLam(y).
			\]
			\item[(iii)] For any $g \in \mathrm{Sp}(n,1)$,
			$c(g(D)) = g(c(D))$.
		\end{enumerate}
	\end{theorem}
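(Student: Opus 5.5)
The plan is to go through the energy functional $G$ and establish (ii) first, then obtain (i) by identifying the critical-point equation of $G$, and finally (iii) by invariance.

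\emph{Step 1: $G$ is well defined and proper.} We use $\log\cosh^2(t/2)\le t$ for $t\ge0$. The triangle inequality gives $d_H(x,y)\le d_H(x,0)+d_H(0,y)$. Hence
\[
G(x)\le \Lambda_{\HH}(D)\,d_H(0,x)+\int_D d_H(0,y)\,\dLam(y)<\infty
\]
for every $x$. For coercivity we use $\log\cosh^2(t/2)\ge t-2\log 2$. Choose $R$ so that $D_R=D\cap\{d_H(0,y)\le R\}$ has positive measure. Then for $y\in D_R$,
\[
G(x)\ge \int_{D_R}\bigl(d_H(0,x)-R-2\log2\bigr)\,\dLam(y),
\]
which tends to $+\infty$ as $x\to\pB$. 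Continuity of $G$ follows by dominated convergence with the same bound. So $G$ attains a global minimum on $\BH$.

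\emph{Step 2: strict geodesic convexity.} For a unit-speed geodesic $\gamma$ and fixed $y$, set $f(t)=\log\cosh^2(\tfrac12 d_H(\gamma(t),y))$. We want $f''>0$ (or at least $f''\ge0$, with strict inequality on a set of $y$ of positive measure). The cleanest route is to use $\mathrm{Sp}(n,1)$-homogeneity to normalize $y=0$. Then $\cosh^2(\tfrac12 d_H(0,q))=1/(1-|q|^2)$, so $f(t)=-\log(1-|\gamma(t)|^2)$. Geodesics are images under $\Phi_a$ and left/right unit-quaternion rotations of the real diameter $t\mapsto\tanh(t/2)e_1$. Writing $\gamma(t)$ explicitly via the Hua involution, $1-|\gamma(t)|^2$ becomes $(1-|a|^2)(1-\tanh^2(t/2))/|1-\langle\cdot,a\rangle|^2$. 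This reduces $f$ to
\[
\log\cosh^2(t/2)+2\log\bigl|1-\tanh(t/2)\,\alpha\bigr|+\text{const},
\]
with $\alpha\in\HH$ and $|\alpha|<1$.

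This $f$ is a function of one real variable, and its convexity is a direct computation. Alternatively, we can invoke the comparison fact that $r\mapsto\log\cosh^2(r/2)$ is convex and increasing, composed with the distance. The distance is convex because the Bergman metric has curvature in $[-1,-1/4]$ (we normalize so that $\le -1/4<0$). Then $\nabla^2 d\ge \tfrac12\coth(d/2)\cdot(\text{transverse part})$, and the second derivative of $\log\cosh^2(r/2)$ along the radial direction is $\tfrac12\operatorname{sech}^2(r/2)>0$. Combining the radial and transverse parts gives $f''>0$ everywhere, so $G''>0$. Differentiation under the integral is justified by the bound $|f'|\le1$ and by a local uniform bound on $f''$ away from $y=\gamma(t)$, which is integrable since the singularity is absent here ($\log\cosh^2$ is smooth at $0$). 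Strict convexity together with Step 1 gives a unique minimizer $c$. This settles (ii).

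\emph{Step 3: the Euler--Lagrange equation.} The goal is to show $\nabla G(c)=0$ is equivalent to $\int_D\Phi_c\,\dLam=\mathbf 0$. By invariance, $G(\Phi_c(x))=\int_{\Phi_c(D)}\log\cosh^2(\tfrac12 d_H(x,z))\,\dLam(z)$, since $\Phi_c$ is an isometry preserving $\Lambda_{\HH}$. The gradient at $x=0$ is $\nabla_x\bigl[-\log(1-|z|^2)\bigr]$ with $z=\Phi_x(0)$-type terms. More directly, $\log\cosh^2(\tfrac12 d_H(x,z))=\log\frac{|1-\langle z,x\rangle|^2}{(1-|x|^2)(1-|z|^2)}$. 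Its Euclidean derivative in $x$ at $x=0$ is $-2\,\mathrm{Re}\langle z,\cdot\rangle$, so
\[
\nabla G(\Phi_c(\cdot))\big|_0=-2\int_{\Phi_c(D)}z\,\dLam(z)=-2\int_D\Phi_c(q)\,\dLam(q).
\]
Hence $c$ is critical iff the integral vanishes. Critical points of a strictly geodesically convex function are unique global minima, which gives both existence and uniqueness in (i). Integrability of $\Phi_c$ is automatic since $|\Phi_c|<1$.

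\emph{Step 4: invariance.} For $g\in\mathrm{Sp}(n,1)$ we have $G_{g(D)}(g(x))=G_D(x)$, by the invariance of $d_H$ and $\Lambda_{\HH}$. So the minimizers correspond, which gives (iii). The hypotheses are preserved because $d_H(0,g(y))\le d_H(0,g(0))+d_H(0,y)$.

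I expect the main obstacle to be Step 2: getting strict convexity concretely in the quaternionic setting. The noncommutativity enters the formula for $|1-\langle\gamma(t),a\rangle|$, and we must also verify the curvature normalization. If the explicit one-variable computation becomes messy, I would fall back on the Hessian comparison argument above.
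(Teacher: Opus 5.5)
Your proposal is correct. Its skeleton matches the paper's: coercivity from $\log\cosh^2(t/2)\ge t-2\log 2$, the identity $\nabla(G\circ\Phi_c)(0)=-2\int_D\Phi_c\,\dLam$, and existence and uniqueness via the unique critical point of a strictly geodesically convex function. The differences are in the convexity step and in part (iii).

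\textbf{Convexity.} Your first route (normalize $y=0$, write $\gamma(t)=\Phi_a(\tanh(t/2)v)$) lands on exactly the paper's one-variable function $\log(1-2au+r^2u^2)-\log(1-u^2)$ with $u=\tanh(t/2)$. The paper reaches it more directly by moving the geodesic, not $y$, to the origin. However, the convexity you call ``a direct computation'' is the entire technical content of the paper's proof. The paper derives $f''(t)=\frac{1-u^2}{2}N(u)/P(u)^2$ with $N(u)=(1+r^2-2a^2)-2a(1-r^2)u+(2a^2-r^4-r^2)u^2$. It then proves $N>0$ on $[-1,1]$ by a case analysis on the leading coefficient and the vertex position. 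This is not automatic: the summand $\log|1-u\alpha|^2$ is concave near $t=0$, for instance with second $t$-derivative $-\alpha^2/2$ when $\alpha$ is real.

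Your Hessian-comparison fallback does close this step, and it is a genuinely different, more conceptual argument. With $\phi(s)=\log\cosh^2(s/2)$ you have $\phi'=\tanh(s/2)$ and $\phi''=\tfrac12\operatorname{sech}^2(s/2)$. The bound $\nabla^2 d\ge\tfrac12\coth(d/2)(g-\nabla d\otimes\nabla d)$ then gives $f''\ge\tfrac12\operatorname{sech}^2(d/2)\cos^2\theta+\tfrac12\sin^2\theta>0$. You should write this inequality out, since you only assert the ``combining'' step. In fact $K\le 0$ alone suffices, using $\nabla^2 d\ge d^{-1}(g-\nabla d\otimes\nabla d)$, so the curvature normalization you worry about plays no role. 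The parameter value with $\gamma(t)=y$ is handled by smoothness of $\cosh^2(d/2)$, which follows from Lemma~\ref{lem:poisson-kernel}. This route works on any Hadamard manifold but relies on a comparison theorem; the paper's route is elementary and self-contained. You also need not differentiate $G$ twice under the integral: integrating the pointwise strict convexity inequalities over $D$ already gives strict convexity of $G$.

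\textbf{Part (iii).} Your identity $G_{g(D)}\circ g=G_D$, together with your check that $g(D)$ still satisfies the moment condition, is simpler than the paper's argument. The paper uses the intertwining relation $\Phi_{g(c)}\circ g=U\circ\Phi_c$ and the $\mathbb{R}$-linearity of $U$ to pull $U$ outside the integral. Your version needs neither, because the barycenter is already characterized as the unique minimizer.
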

	
	The proof of parts~(i) and~(ii) is carried out in
	Theorem~\ref{thm:existence} (Section~\ref{sec:proof-main}), and
	part~(iii) is proved in Theorem~\ref{thm:invariance} within the
	same subsection.
	
	The paper is organized as follows.  In Section~\ref{sec:qhs} we recall
	the necessary background on quaternionic hyperbolic geometry.
	Section~\ref{sec:metric-measure} discusses the invariant
	metric and measure on the ball model, including an explicit
	computation of the volume form and a parametrisation of geodesics.
	Section~\ref{sec:Phi} constructs the
	quaternionic Hua involution $\Phi_u$ and proves the intertwining
	relation and the Jacobian formula.  Section~\ref{sec:barycenter}
	contains the definition of the quaternionic barycenter and the proof
	of the main theorem.  Section~\ref{sec:finite} treats finite point
	sets, and Section~\ref{sec:examples} provides examples illustrating
	the theorem.
	
	\section{Quaternionic hyperbolic space}\label{sec:qhs}
	
	In this section, we give necessary background materials on
	quaternionic hyperbolic geometry.  More details can be found in
	\cite{chen,gold,kimp,most}.
	
	We recall that a real quaternion is of the form
	$q = q_0 + q_1\mathbf{i} + q_2\mathbf{j} + q_3\mathbf{k} \in \bh$
	where $q_i \in \br$ and
	$\mathbf{i}^2 = \mathbf{j}^2 = \mathbf{k}^2 = \mathbf{i}\mathbf{j}\mathbf{k} = -1$.
	Let $\overline{q} = q_0 - q_1\mathbf{i} - q_2\mathbf{j} - q_3\mathbf{k}$
	and $|q| = \sqrt{\overline{q}q} = \sqrt{q_0^2 + q_1^2 + q_2^2 + q_3^2}$
	be the conjugate and modulus of $q$, respectively.
	
	Throughout this paper, we treat $\bh^{n+1}$ as a {\it right vector space}
	over $\bh$.  Thus scalar multiplication is written as $v q$ for
	$v \in \bh^n$, $q \in \bh$, and linear transformations act on the left.

	Let $\bh^{n,1}$ be the vector space of dimension $n+1$ over $\bh$
	with the unitary structure defined by the Hermitian form
	\begin{equation}\label{eq:herm-form}
		\langle{\bf z},\,{\bf w}\rangle = {\bf w}^* J {\bf z} =
		\overline{w_1}z_1 + \cdots + \overline{w_n}z_n - \overline{w_{n+1}}z_{n+1},
	\end{equation}
	where ${\bf z}$ and ${\bf w}$ are column vectors in $\bh^{n,1}$ with
	entries $(z_1,\dots,z_{n+1})$ and $(w_1,\dots,w_{n+1})$ respectively,
	$\cdot^*$ denotes the conjugate transpose, and $J$ is the Hermitian
	matrix
	\[
	J = \begin{pmatrix}
		I_n & 0 \\
		0 & -1
	\end{pmatrix}.
	\]
	
	For vectors $z,w \in \bh^n$, we use the same notation $\langle \cdot,\cdot \rangle$
	for the standard quaternionic Hermitian inner product
	$\langle z, w\rangle = w^* z$; this is a slight abuse of notation, but it
	can be regarded as the restriction of the Hermitian form \eqref{eq:herm-form}
	to vectors of the form $(z,0)$, i.e.\ $\langle z,w\rangle = \langle (z,0), (w,0)\rangle$.
	In coordinates,
	$\langle z,w\rangle = \overline{w_1}z_1+\cdots+\overline{w_n}z_n$.
	It satisfies $\overline{\langle z,w\rangle} = \langle w,z\rangle$ and is
	\emph{linear} in the first argument and conjugate linear in the second,
	because for any $\lambda,\mu\in\mathbb{H}$,
	$\langle z \lambda, w \mu\rangle = \overline{\mu}\,\langle z,w\rangle\,\lambda$.
	
	We define a \emph{unitary transformation} $g$ to be an automorphism
	of $\bh^{n,1}$, that is, a linear bijection such that
	$\langle g({\bf z}),\, g({\bf w})\rangle = \langle{\bf z},\,{\bf w}\rangle$
	for all ${\bf z}$ and ${\bf w}$ in $\bh^{n,1}$.  The group of all
	unitary transformations is denoted by $\mathrm{Sp}(n,1)$:
	\begin{equation}\label{eq:Sp-def}
		\mathrm{Sp}(n,1) = \{ A \in \mathrm{GL}(n+1,\bh) : A^* J A = J \}.
	\end{equation}
	
	Following \cite{chen}, set
	\begin{align*}
		V_0 &= \{ {\bf z} \in \bh^{n,1} \setminus \{0\} :
		\langle{\bf z},\,{\bf z}\rangle = 0 \}, \\
		V_- &= \{ {\bf z} \in \bh^{n,1} :
		\langle{\bf z},\,{\bf z}\rangle < 0 \}.
	\end{align*}
	Let $V^s = V_- \cup V_0$ and let
	$P : V^s \to P(V^s) \subset \bh^n$ be the projection map
	\[
	P(z_1,\dots,z_n, z_{n+1})^{\mathrm t}
	= (z_1 z_{n+1}^{-1}, \dots, z_n z_{n+1}^{-1})^{\mathrm t},
	\]
	where $\cdot^{\mathrm t}$ denotes the transpose.  Note that we use right
	division by $z_{n+1}$, which is consistent with the right vector space
	structure.
	
	Quaternionic hyperbolic $n$-space is
	$\mathbf{H}_{\bh}^n = P(V_-)$,
	and its boundary is $\partial \mathbf{H}_{\bh}^n = P(V_0)$.
	In non-homogeneous coordinates, we identify $\mathbf{H}_{\bh}^n$ with
	the unit ball
	\[
	\BH= \bigl\{z\in\mathbb{H}^{n}:|z_{1}|^{2}+\cdots+|z_{n}|^{2}<1\bigr\}.
	\]

	The Bergman metric on $\mathbf{H}_{\bh}^n$ is given by the distance
	formula
	\begin{equation}\label{eq:dist-proj}
		\cosh^2 \frac{\rho(z,w)}{2}
		= \frac{\langle{\bf z},\,{\bf w}\rangle \langle{\bf w},\,{\bf z}\rangle}
		{\langle{\bf z},\,{\bf z}\rangle \langle{\bf w},\,{\bf w}\rangle},
		\qquad {\bf z} \in P^{-1}(z),\; {\bf w} \in P^{-1}(w).
	\end{equation}
	Alternatively, the Bergman metric is given by the line element
	\begin{equation}\label{eq:metric-line}
		ds^2 = \frac{-4}{\langle{\bf z},\,{\bf z}\rangle} \det
		\begin{pmatrix}
			\langle{\bf z},\,{\bf z}\rangle & \langle d{\bf z},\,{\bf z}\rangle \\
			\langle{\bf z},\,d{\bf z}\rangle & \langle d{\bf z},\,d{\bf z}\rangle
		\end{pmatrix}.
	\end{equation}
	(The determinant is understood via the real representation of quaternions.)
	
	The holomorphic isometry group of $\mathbf{H}_{\bh}^n$ with respect
	to the Bergman metric is the projective unitary group
	$\mathrm{PSp}(n,1) = \mathrm{Sp}(n,1)/\{\pm I_{n+1}\}$,
	which acts on $P(\bh^{n,1})$ by matrix multiplication.
	
	Let  $g = \begin{pmatrix} A & \alpha \\ \beta & a \end{pmatrix} \in \mathrm{Sp}(n,1)$, where $A \in \mathbb{H}^{n \times n},\,\alpha \in \mathbb{H}^{n},\, \beta \in \mathbb{H}^{1 \times n},\,a \in \mathbb{H}$. The preservation of the Hermitian form implies $J=g^{*}Jg$; consequently
	\begin{equation}\label{ginv}
		g^{-1}=Jg^{*}J=\begin{pmatrix}A^{*}&-\beta^{*}\\-\alpha^{*}&\bar a\end{pmatrix}.
	\end{equation}
	The identities $gg^{-1}=g^{-1}g=I$ yield the standard relations
	for elements of $\mathrm{Sp}(n,1)$; in particular we have
	$|a|^{2}=|\alpha|^{2}+1$ and $|a|^{2}=|\beta|^{2}+1$, which
	imply $|a|\geq 1$.
	
	The action of  $g$  on the ball model is 
	\[
	g(z) = (Az + \alpha)(\beta z + a)^{-1},\quad z\in \cB.
	\]

	Following Chen and Greenberg \cite{chen}, a non-trivial element
	$g \in \mathrm{Sp}(n,1)$ is called:
	\begin{itemize}
		\item[(i)] \emph{elliptic} if it has a fixed point in $\mathbf{H}_{\bh}^n$;
		\item[(ii)] \emph{parabolic} if it has exactly one fixed point, which
		lies in $\partial \mathbf{H}_{\bh}^n$;
		\item[(iii)] \emph{loxodromic} if it has exactly two fixed points,
		both lying in $\partial \mathbf{H}_{\bh}^n$.
	\end{itemize}
	
	Let
	\[
	\mathrm{Sp}(n) = \{ A \in \mathrm{GL}(n,\bh) : A A^* = I_n \}, \qquad
	\mathrm{Sp}(1) = \{ q \in \bh : |q| = 1 \}.
	\]
	The stabilizer of the point with homogeneous coordinates
	$(0,\dots,0,1)$ is
	\[
	K = \mathrm{Sp}(n) \times \mathrm{Sp}(1)
	= \left\{ \begin{pmatrix}
		A & 0 \\
		0 & q
	\end{pmatrix} : A \in \mathrm{Sp}(n),\; q \in \mathrm{Sp}(1) \right\}.
	\]
	We identify the quaternionic hyperbolic space with the homogeneous space
	\begin{equation}\label{eq:homogeneous}
		\mathbf{H}_{\bh}^n = \mathrm{Sp}(n,1) / (\mathrm{Sp}(n) \times \mathrm{Sp}(1)).
	\end{equation}
	
	\section{Invariant metric and measure on the ball model}\label{sec:metric-measure}
	
	For computations, it is convenient to work directly with the unit
	ball model $\BH$.  The hyperbolic metric (see \cite[Proposition 4.1]{cao16})   given by (\ref{eq:metric-line})   can be expressed as
	\begin{equation}\label{eq:ball-metric}
		ds^2 = \frac{4|dq|^2}{1-|q|^2}
		+ \frac{4\bigl|\sum_{i=1}^n \overline{q_i}\,dq_i\bigr|^2}
		{(1-|q|^2)^2},
	\end{equation}
	where the sum is the usual quaternionic Hermitian product on $\bh^n$.

	The distance function $d_H$ associated with~\eqref{eq:ball-metric} can be
	expressed through the involutive automorphism $\Phi_q$ constructed in
	Section~\ref{sec:Phi}.  Since $\Phi_q(q)=0$ and $\Phi_q$ is an isometry,
	\[
	d_H(p,q)=d_H(\Phi_q(p),\Phi_q(q))=d_H(\Phi_q(p),0)
	=\log\frac{1+|\Phi_q(p)|}{1-|\Phi_q(p)|}.
	\]
	Thus
	\begin{equation}\label{eq:dist-Phi}
		d_H(p,q) = \log \frac{1 + |\Phi_q(p)|}{1 - |\Phi_q(p)|}.
	\end{equation}
	The isometry group $\mathrm{Sp}(n,1)$ acts transitively on $\BH$ and
	preserves $d_H$.
	
	The Riemannian volume form of the metric \eqref{eq:ball-metric} can
	be obtained by a direct computation.
	Write $z=(z_1,\dots,z_n)\in\BH$, $r=|z|$, and
	$z_i=x_i+y_i\mathbf{i}+v_i\mathbf{j}+w_i\mathbf{k}$.
	At a point $P=(0,\dots,0,r)$ the real $4n\times4n$ metric tensor
	takes the diagonal form
	\[
	g_{\mathbb{R}}|_P = \frac{4}{(1-r^2)^2}\,
	\operatorname{diag}(1-r^2,\dots,1-r^2,\,1,1,1,1),
	\]
	whose determinant is $\det g_{\mathbb{R}}=4^{4n}/(1-r^2)^{4(n+1)}$.
	By homogeneity this formula holds everywhere.
	Hence the Riemannian volume element is
	\begin{equation}\label{eq:vol-form-unnormalized}
		d\Vol = \frac{4^{2n}}{(1-r^2)^{2(n+1)}}\, r^{4n-1}\,dr\,d\sigma_{4n-1},
	\end{equation}
	where $d\sigma_{4n-1}$ is the Euclidean volume element of the
	unit sphere $S^{4n-1}$.
	
	The metric~\eqref{eq:ball-metric} already has holomorphic sectional
	curvature $-1$, so its Riemannian
	volume form is the hyperbolic measure $\dLam$ on $\BH$.
	From~\eqref{eq:vol-form-unnormalized} we obtain, in polar coordinates,
	\[
	\dLam(q) = \frac{4^{2n}}{(1-r^2)^{2(n+1)}}\, r^{4n-1}\,dr\,d\sigma_{4n-1},
	\]
	and, in Cartesian coordinates,
	\begin{equation}\label{eq:hyp-measure}
		\dLam(q) = \frac{4^{2n}\,d\lambda(q)}{(1-|q|^2)^{2n+2}},
	\end{equation}
	where $d\lambda(q)$ is the Lebesgue measure on $\HH^n\cong\RR^{4n}$.
	
	Using $r=\tanh(\rho/2)$, the hyperbolic volume of a ball
	$B(\rho)=\{q\in\BH:d_H(0,q)<\rho\}$ is
	\begin{align}
		\Vol(B(\rho))
		&= \int_0^{\tanh(\rho/2)}\! \frac{4^{2n}\,\sigma_{4n-1}\,
			r^{4n-1}}{(1-r^2)^{2(n+1)}}\,dr \nonumber\\
		&= \frac{(4\pi)^{2n}}{(2n+1)!}
		\sinh^{4n}\!\Bigl(\frac{\rho}{2}\Bigr)
		\Bigl[1 + 2n\cosh^2\!\Bigl(\frac{\rho}{2}\Bigr)\Bigr].
		\label{eq:ball-volume}
	\end{align}
	The detailed computation can be found in~\cite{cao16}.
	From~\eqref{eq:ball-volume} we obtain
	$\operatorname{Vol}(B(\rho))\asymp e^{(2n+1)\rho}$ as $\rho\to\infty$.
	The exponent $2n+1$ is typical of rank-one symmetric spaces of non-compact
	type.  This exponential volume growth, together with the finite
	first-moment assumption, guarantees that the energy functional $G(x)$ grows
	fast enough as $|x|\to1$ to ensure a global minimum (see
	Section~\ref{sec:barycenter}).

	For the convexity argument we need a convenient parametrisation of
	geodesics in the ball model.  The following description follows
	directly from the homogeneous model recalled in Section~2.
	
	\begin{proposition}[Geodesics through the origin]\label{prop:geod-origin}
		Let $v\in\pB$ be a unit vector (i.e.\ $|v|=1$).
		Then the curve
		\[
		\gamma(t)=\tanh\!\Bigl(\frac{t}{2}\Bigr)v,\qquad t\in\RR,
		\]
		is a geodesic in $\BH$ parametrised by arc length $t$, with $\gamma(0)=0$.
		Conversely, every geodesic through the origin can be written in this form,
		up to composition with a linear isometry $U\in\mathrm{Sp}(n)\times\mathrm{Sp}(1)$.
	\end{proposition}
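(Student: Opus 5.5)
The plan is to check three things: that $\gamma$ has unit speed for the metric \eqref{eq:ball-metric}, that it minimises length between any two of its points, and that it exhausts the geodesics through $0$ up to $K$.

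\textbf{Speed.} Along $\gamma(t)=\tanh(t/2)v$ we have $|\gamma|=r(t)=\tanh(t/2)$ and $d\gamma=\tfrac12\operatorname{sech}^2(t/2)\,v\,dt$. Since $|v|=1$,
\[
\sum_i\overline{\gamma_i}\,d\gamma_i=r\,\tfrac12\operatorname{sech}^2(t/2)\,dt
\]
is real. Substituting into \eqref{eq:ball-metric} and using $1-r^2=\operatorname{sech}^2(t/2)$, the two terms simplify to
\[
ds^2=\frac{4\,\tfrac14\operatorname{sech}^4(t/2)\,(1-r^2+r^2)}{\operatorname{sech}^4(t/2)}\,dt^2=dt^2 .
\]
So $\gamma$ is parametrised by arc length.

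\textbf{Minimality.} For an arbitrary piecewise $C^1$ curve $c(s)$ in $\BH$, write $c=\rho\,\omega$ with $\rho=|c|$ and $|\omega|=1$ where $\rho>0$. Then $|dc|^2\ge (d\rho)^2$, and $\bigl|\sum\overline{c_i}\,dc_i\bigr|\ge|\operatorname{Re}\sum\overline{c_i}\,dc_i|=\rho\,|d\rho|$. Hence
\[
ds^2\ge\frac{4\,d\rho^2}{1-\rho^2}+\frac{4\rho^2\,d\rho^2}{(1-\rho^2)^2}=\frac{4\,d\rho^2}{(1-\rho^2)^2}.
\]
The length of any curve joining two points of radii $\rho_1\le\rho_2$ is therefore at least $2\arctanh\rho_2-2\arctanh\rho_1$, and the radial segment attains this bound. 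Equality forces $\omega$ to be constant; a curve through $0$ may jump direction only at the origin, which handles the case where the two points lie on opposite sides of $0$ along $\pm v$. Thus $\gamma$ is a globally minimising unit-speed curve, hence a geodesic. This also recovers $d_H(0,q)=\log\frac{1+|q|}{1-|q|}$, consistent with \eqref{eq:dist-Phi}.

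\textbf{Converse.} A geodesic through $0$ is determined by its unit initial vector $w\in T_0\BH\cong\HH^n$. The metric at $0$ is $4|dq|^2$, so $w=\tfrac12 v$ with $|v|=1$. By uniqueness of geodesics with given initial data, the geodesic is exactly $t\mapsto\tanh(t/2)v$. The elements $\operatorname{diag}(U,\mu)\in K$ act by $z\mapsto Uz\mu^{-1}$. They fix $0$, are isometries, and act transitively on the unit sphere, so any such geodesic is the image of a fixed model geodesic, say $\tanh(t/2)e_1$, under some element of $K$. This gives the stated form.

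The main obstacle is the non-commutative cross-term $\bigl|\sum\overline{q_i}\,dq_i\bigr|$. It is real on radial curves but not in general, and the minimality step depends on bounding it below correctly by its real part. Once that inequality is in place, the rest is the ODE uniqueness argument and transitivity of $K$ on directions.
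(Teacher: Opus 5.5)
Your unit-speed computation and the comparison $ds^2\ge 4\,d\rho^2/(1-\rho^2)^2$ are correct, but the conclusion that $\gamma$ is globally minimising does not follow from them. The comparison bounds the length of a competitor below only by the total variation of $2\arctanh\rho$ along it. Take $\gamma(t_1),\gamma(t_2)$ with $t_1<0<t_2$. Then $2\arctanh\rho_i=|t_i|$, so you only get $L\ge\bigl||t_2|-|t_1|\bigr|$, whereas the arc of $\gamma$ has length $|t_1|+|t_2|$. A competitor that avoids the origin is not controlled; your remark about changing direction at $0$ only covers competitors that pass through $0$. Such pairs can be arbitrarily close, so even local minimality at $t=0$ is unproved. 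Your converse (ODE uniqueness from the data $(0,\tfrac12 v)$) needs $\gamma$ to be a geodesic exactly there.

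The repair is short. Your argument does show that $\gamma$ is locally minimising, hence a geodesic, on $(0,\infty)$ and on $(-\infty,0)$. Since $\gamma$ is smooth on $\RR$, the equation $\nabla_{\dot\gamma}\dot\gamma=0$ holds at $t=0$ by continuity. For the global identity $d_H(\gamma(t_1),\gamma(t_2))=|t_1-t_2|$, apply the isometry $\Phi_u$ with $u=\gamma(t_1)$. It preserves the line $\RR v$, acting by $xv\mapsto\frac{\tau-x}{1-\tau x}v$ with $\tau=\tanh(t_1/2)$. It sends the pair to $0$ and $\tanh\frac{t_1-t_2}{2}\,v$, which lie on one ray from $0$, where your comparison applies.

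With that fix, your route is genuinely different from the paper's. The paper lifts $v$ to the null vectors $\frac1{\sqrt2}(\pm v,1)$ and uses $\Gamma(t)=e^{t/2}\tilde{\mathbf u}+e^{-t/2}\tilde{\mathbf v}$ in $\HH^{n,1}$. Combined with \eqref{eq:dist-proj}, this gives $\langle\Gamma(t_1),\Gamma(t_2)\rangle=-2\cosh\frac{t_1-t_2}{2}$, hence $d_H(\gamma(t_1),\gamma(t_2))=|t_1-t_2|$ for all pairs at once; the paper itself only records $d_H(0,\gamma(t))=|t|$. That is shorter and global, but it rests on identifying \eqref{eq:dist-proj} with the distance of \eqref{eq:ball-metric}, which the paper takes from the literature. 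Your argument uses only the line element and Cauchy--Schwarz, so it is self-contained.

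For the converse, the paper says a geodesic through $0$ has antipodal endpoints $\pm u$ and moves $u$ to $v$ by an element of $K$. This tacitly presupposes that such geodesics are diameters. Your ODE-uniqueness argument avoids that, and it shows no $U$ is needed once $\gamma(0)=0$ and unit speed are fixed.
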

	
	\begin{proof}
		Lift $v$ to the null vectors
		\[
		\tilde{\mathbf u}=\frac{1}{\sqrt{2}}(v,1),\qquad
		\tilde{\mathbf v}=\frac{1}{\sqrt{2}}(-v,1)\in V_0,
		\]
		which satisfy $\langle\tilde{\mathbf u},\tilde{\mathbf v}\rangle=-1$.
		The curve
		$\Gamma(t)=e^{t/2}\tilde{\mathbf u}+e^{-t/2}\tilde{\mathbf v}$
		projects to
		\[
		P(\Gamma(t))=\frac{(e^{t/2}-e^{-t/2})v}{e^{t/2}+e^{-t/2}}
		=\tanh\!\Bigl(\frac t2\Bigr)v,
		\]
		and is a geodesic parametrised by arc length because
		$\langle\Gamma(t),\Gamma(t)\rangle=-2$ and
		$d_H(0,\gamma(t))=\log\frac{1+|\gamma(t)|}{1-|\gamma(t)|}=|t|$.
		Conversely, a geodesic through $0$ has antipodal boundary points $u,-u$;
		a linear isometry sending $u$ to $v$ maps it to the curve above.
	\end{proof}
	
	\section{The involutive automorphism $\Phi_u$}\label{sec:Phi}
	
	We construct the involutive automorphism $\Phi_u$ that exchanges
	$0$ and $u$, the quaternionic analogue of the Hua transformation
	in the complex case \cite{Hua,Rudin,zhu}.  Due to the non-commutativity of the
	quaternions, careful handling of the right module structure is
	required.
	
	\subsection{Quaternionic Hua transformation}
	
	For $u \in \BH$ with $|u| < 1$, set $s = \sqrt{1-|u|^2}$.
	Define the matrix
	\begin{equation}\label{eq:Au}
		A_u = \frac{1}{1+s}\,uu^{*} + sI_n .
	\end{equation}
	Here $uu^{*}$ denotes the quaternionic outer product, acting on a vector $x$ as $u\langle x,u\rangle = u(u^{*}x)$.
	When $u = 0$ we have $s = 1$, and the formula gives $A_0 = I_n$.
	
	\begin{lemma}\label{lemmaAu}
		The matrix $A_u$ satisfies:
		\begin{enumerate}
			\item[(i)] $A_u u = u$, and $A_u v = s\,v$ for all $v \perp u$.
			\item[(ii)] $A_u^{-1} = -\dfrac{1}{(1+s)s}\,uu^{*} + \dfrac1s I_n$.
			\item[(iii)] $A_u$ is Hermitian, positive definite, and
			$A_u^2 = s^2 I_n + uu^{*}$.
		\end{enumerate}
	\end{lemma}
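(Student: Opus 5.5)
The plan is to use the orthogonal decomposition $\HH^n = u\HH \oplus u^{\perp}$, where $u^{\perp}=\{v:\langle v,u\rangle = u^*v = 0\}$ is a right $\HH$-submodule. Every identity can then be checked on the two summands separately. Write $P = uu^*$. Since $\HH$ is a division ring, $u^*u = |u|^2$ is real, and therefore central. This gives $P^2 = u(u^*u)u^* = |u|^2 P$, which is the one algebraic relation the proof uses repeatedly. The case $u=0$ is trivial ($A_0=I_n$), so assume $u\neq 0$ where needed.

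For (i), compute $A_u u = \frac{1}{1+s}u(u^*u) + su = \bigl(\frac{|u|^2}{1+s}+s\bigr)u$. Since $|u|^2 = 1-s^2 = (1-s)(1+s)$, the coefficient is $(1-s)+s = 1$. For $v\perp u$ we have $u^*v=0$, so $Pv=0$ and $A_uv = sv$. Right linearity, $A_u(u\lambda) = (A_uu)\lambda$, then shows that $A_u$ acts as the identity on $u\HH$ and as multiplication by $s$ on $u^\perp$.

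For (ii), let $B$ denote the claimed inverse. The coefficients $\frac{1}{1+s}$, $s$, $\frac1s$ are real, so they commute with everything, and $A_uB$ expands as a real polynomial expression in $P$. Expanding and using $P^2 = |u|^2P$, the coefficient of $P$ is
\[
-\frac{|u|^2}{(1+s)^2s} + \frac{1}{(1+s)s} - \frac{1}{1+s}.
\]
Writing $|u|^2=(1-s)(1+s)$, this becomes $\frac{1}{1+s}\bigl(-\frac{1-s}{s}+\frac1s-1\bigr)=0$. Hence $A_uB = I_n$, and $BA_u=I_n$ follows the same way (or from finite-dimensionality). As a cross-check, $B$ acts as $1$ on $u\HH$ and as $1/s$ on $u^\perp$, exactly inverting (i).

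For (iii), Hermitian-ness follows from $(uu^*)^* = uu^*$ together with the reality of the coefficients. For positive definiteness, compute
\[
\langle A_u x,x\rangle = x^*A_ux = \frac{|u^*x|^2}{1+s} + s|x|^2,
\]
which is $>0$ for $x\neq0$ because $s>0$ (this uses $|u|<1$). For the square, expand $A_u^2 = \frac{P^2}{(1+s)^2} + \frac{2s}{1+s}P + s^2I$. The coefficient of $P$ is $\frac{(1-s)(1+s)}{(1+s)^2}+\frac{2s}{1+s} = \frac{1-s+2s}{1+s} = 1$, so $A_u^2 = s^2I_n+uu^*$.

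The only delicate point is the noncommutativity. One must make sure that every scalar pulled through a product is real, namely $|u|^2$ and the coefficients. In particular $u^*u$ is real even though $uu^*$ is a genuine quaternionic matrix, and this must be checked before collapsing $P^2$. Once that is done, the rest is the elementary identity $|u|^2=(1-s)(1+s)$.
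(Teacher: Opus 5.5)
Your proposal is correct and follows essentially the same route as the paper: a direct computation using that $u^*u = |u|^2 = (1-s)(1+s)$ is real, and for (iii) the same expansion of $A_u^2$ and the same positivity argument via $x^*(uu^*)x = |u^*x|^2$. The only differences are that you carry out explicitly, via $P^2 = |u|^2P$, the verification in (ii) that the paper calls a ``direct verification'', and you state explicitly that $s>0$ is needed for positive definiteness.
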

	
	\begin{proof}
		(i) Because $u^*u = |u|^2 = 1-s^2$, we have
		$A_u u = \frac{1}{1+s}u(u^*u) + s u
		= \frac{1-s^2}{1+s}u + s u = (1-s)u + s u = u$.
		If $v \perp u$, i.e.\ $u^* v = 0$, then $A_u v = s v$.
		(ii) A direct verification shows that
		$\bigl(-\frac{1}{(1+s)s}uu^* + \frac1s I_n\bigr) A_u = I_n$.
		(iii) $A_u^* = \frac{1}{1+s} uu^* + s I_n = A_u$, so $A_u$ is Hermitian. Since \(x^{*}(uu^{*})x = |u^{*}x|^{2}\ge 0\) for every \(x\in\mathbb{H}^{n}\), the matrix \(uu^{*}\) is positive semidefinite, and thus \(A_{u}\) is positive definite.
		Finally,
		\begin{align*}
			A_u^2 &= \bigl( s I_n + c\, uu^* \bigr)^2,\quad c=\frac{1}{1+s}\\
			&= s^2 I_n + \bigl(2sc + c^2|u|^2\bigr) uu^* \\
			&= s^2 I_n + \bigl(2s\frac{1}{1+s} + \frac{1}{(1+s)^2}(1-s^2)\bigr) uu^* \\
			&= s^2 I_n + uu^*.
		\end{align*}
	\end{proof}
	
	The quaternionic Hua involution is the map
	$\Phi_u : \overline{\BH} \to \overline{\BH}$ defined by the
	projective action of the matrix
	\begin{equation}\label{defphi}
		\Phi_u = \begin{pmatrix}
			-\dfrac{A_u}{s} & \dfrac{u}{s} \\[10pt]
			-\dfrac{u^{*}}{s} & \dfrac{1}{s}
		\end{pmatrix} \in \mathrm{Sp}(n,1).
	\end{equation}
	Explicitly, for $z \in \overline{\BH}$,
	\begin{equation}\label{eq:Phi-def}
		\Phi_u(z) = \bigl( u - A_u z \bigr)
		\bigl(1 - \langle z, u \rangle \bigr)^{-1}.
	\end{equation}
	(Recall that $\langle z,u\rangle = u^*z$ with the inner product defined
	in Section~2.)
	For $u = 0$ we have $A_0 = I_n$, $s = 1$, and $\Phi_0(z) = -z$.
	
	\begin{proposition}[Properties of $\Phi_u$]\label{prop:Phi-prop}
		\begin{itemize}
			\item[(i)] $\Phi_u(0) = u$ and $\Phi_u(u) = 0$.
			\item[(ii)] $\Phi_u \circ \Phi_u = \operatorname{id}$.
			\item[(iii)] $\displaystyle
			|\Phi_u(z)|^2 = 1 - \frac{(1-|u|^2)(1-|z|^2)}
			{|1 - \langle z, u \rangle|^2}$.
			\item[(iv)] $\Phi_u$ maps $\BH$ onto itself and extends
			continuously to the boundary.
			\item[(v)] $\Phi_u^2 = I_{n+1}$ and $\Phi_u = \Phi_u^{-1}$.
			\item[(vi)] The matrix $\Phi_u$ has the following eigenvectors:
			\[
			\Phi_u\begin{pmatrix}u\\1+s\end{pmatrix}
			= \begin{pmatrix}u\\1+s\end{pmatrix},\qquad
			\begin{pmatrix}u\\1+s\end{pmatrix}\in V_-,
			\]
			\[
			\Phi_u\begin{pmatrix}u\\1-s\end{pmatrix}
			= -\begin{pmatrix}u\\1-s\end{pmatrix},\qquad
			\begin{pmatrix}u\\1-s\end{pmatrix}\in V_+,
			\]
			and for every $v\perp u$,
			\[
			\Phi_u\begin{pmatrix}v\\0\end{pmatrix}
			= -\begin{pmatrix}v\\0\end{pmatrix},\qquad
			\begin{pmatrix}v\\0\end{pmatrix}\in V_+.
			\]
			Thus $\Phi_u$ is an involutive elliptic isometry whose unique
			fixed point in $\BH$ is $u/(1+s)$.  Geometrically, $\Phi_u$
			is the geodesic symmetry at the point $u/(1+s)$.
		\end{itemize}
	\end{proposition}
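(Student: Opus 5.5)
The plan is to work almost entirely at the level of the $(n+1)\times(n+1)$ matrix $\Phi_u$ in \eqref{defphi}, and to use Lemma~\ref{lemmaAu} throughout.

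\textbf{Membership in $\mathrm{Sp}(n,1)$ and involutivity.} First I check that $\Phi_u^*J\Phi_u=J$. Since $A_u$ is Hermitian, this reduces to three block identities:
\[
\frac{A_u^2-uu^*}{s^2}=I_n,\qquad \frac{-A_uu+u}{s^2}=0,\qquad \frac{|u|^2-1}{s^2}=-1 .
\]
They follow from Lemma~\ref{lemmaAu}(i),(iii) and $s^2=1-|u|^2$. The same block computation gives $\Phi_u^2=I_{n+1}$:
\begin{itemize}
\item the upper-left block is $(A_u^2-uu^*)/s^2=I_n$;
\item the off-diagonal blocks are $(-A_uu+u)/s^2=0$ and its adjoint;
\item the corner is $(-u^*u+1)/s^2=1$.
\end{itemize}
This proves (v), and (ii) follows because the projective action of $I_{n+1}$ is the identity.

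\textbf{Item (i) and the explicit formula.} Applying the matrix to $(0,1)^{\mathrm t}$ gives $(u/s,1/s)$, which projects to $u$. Applying it to $(u,1)$ gives
\[
\bigl((-A_uu+u)/s,\;(1-|u|^2)/s\bigr)=(0,s),
\]
which projects to $0$. Formula \eqref{eq:Phi-def} comes from applying the matrix to $(z,1)$ and right-dividing by the last entry. The common factor $1/s$ cancels. Care is needed only to keep the scalar $(1-u^*z)^{-1}$ on the right.

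\textbf{Item (iii).} This is the step needing the most care with noncommutativity. I would avoid expanding $|u-A_uz|^2$ directly. Instead, use that $\Phi_u$ preserves the Hermitian form. With $\mathbf z=(z,1)$ we have $\Phi_u\mathbf z=(\Phi_u(z),1)\lambda$, where $\lambda=(1-u^*z)/s$. Then
\[
\bigl(|\Phi_u(z)|^2-1\bigr)|\lambda|^2=\langle\Phi_u\mathbf z,\Phi_u\mathbf z\rangle=\langle\mathbf z,\mathbf z\rangle=|z|^2-1,
\]
where we use $\langle \mathbf w\lambda,\mathbf w\lambda\rangle=|\lambda|^2\langle\mathbf w,\mathbf w\rangle$. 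Solving for $|\Phi_u(z)|^2$ gives (iii), since $|\lambda|^2=|1-\langle z,u\rangle|^2/(1-|u|^2)$.

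\textbf{Item (iv).} On $\cB$ we have $|\langle z,u\rangle|\le|u|<1$, so the denominator never vanishes and the map is continuous on $\cB$. By (iii), $|z|<1$ implies $|\Phi_u(z)|<1$, and $|z|=1$ implies $|\Phi_u(z)|=1$. Surjectivity onto $\BH$ follows from (ii).

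\textbf{Item (vi).} The eigenvector identities are direct block multiplications.
\begin{itemize}
\item For $(u,1+s)$: the top entry is $(-u+(1+s)u)/s=u$, and the bottom is $(-(1-s^2)+1+s)/s=1+s$, so it is fixed.
\item For $(u,1-s)$: the analogous computation gives $-(u,1-s)$.
\item For $(v,0)$ with $v\perp u$: Lemma~\ref{lemmaAu}(i) gives $A_uv=sv$ and $u^*v=0$, so the image is $-(v,0)$.
\end{itemize}
The signs of the Hermitian form are
\[
\langle(u,1\pm s),(u,1\pm s)\rangle=1-s^2-(1\pm s)^2=-2s(s\pm1).
\]
This is negative for the $+$ sign and positive for the $-$ sign, and $\langle (v,0),(v,0)\rangle=|v|^2>0$.

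\textbf{Conclusion: the fixed point and the geodesic symmetry.} The $(+1)$-eigenspace is the quaternionic line spanned by the negative vector $(u,1+s)$, so the unique fixed point in $\BH$ is $u(1+s)^{-1}=u/(1+s)$. The $(-1)$-eigenspace is its $J$-orthogonal complement. Conjugating by an element of $\mathrm{Sp}(n,1)$ that sends this point to $0$ turns $\Phi_u$ into $\pm\mathrm{diag}(-I_n,1)$, which acts as $z\mapsto -z$. That map is the geodesic symmetry at $0$ by Proposition~\ref{prop:geod-origin}.
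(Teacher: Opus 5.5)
Your proposal is correct. For (i), (ii), (iv), (v) and the eigenvector checks in (vi) it is the same block-matrix calculation the paper does.

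\textbf{Where you differ: item (iii).} The paper proves (iii) by expanding $|u-A_uz|^2$ after splitting $z$ into components parallel and orthogonal to $u$. You instead verify $\Phi_u^*J\Phi_u=J$ and read (iii) off from preservation of the form on $\mathbf z=(z,1)^{\mathrm t}$, using $\Phi_u\mathbf z=(\Phi_u(z),1)^{\mathrm t}\lambda$.
\begin{itemize}
\item This costs nothing extra. Since $\Phi_u^*J=J\Phi_u$, the identity $\Phi_u^*J\Phi_u=J$ is literally $J\Phi_u^2=J$, i.e.\ the same block computation as $\Phi_u^2=I_{n+1}$.
\item It proves $\Phi_u\in\mathrm{Sp}(n,1)$. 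The paper only asserts this in \eqref{defphi}, yet relies on it for the isometry remark and the Jacobian and measure-invariance results.
\item It handles noncommutativity automatically, since only $|\lambda|^2$ and the real number $\langle\mathbf w,\mathbf w\rangle$ enter. The paper's sketch, by contrast, writes $u\langle u,z\rangle$ where $u(u^*z)=u\langle z,u\rangle$ is meant.
\end{itemize}
In exchange, the paper's route gives the explicit identity $|u-A_uz|^2=|1-\langle z,u\rangle|^2-(1-|u|^2)(1-|z|^2)$ without reference to the Hermitian form.

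\textbf{Fixed point and geodesic symmetry.} Your closing paragraph argues these claims, which the paper's proof does not address at all. One step should be made explicit. A fixed point $z\in\BH$ only gives $\Phi_u(z,1)^{\mathrm t}=(z,1)^{\mathrm t}\mu$ for some $\mu\in\HH$, not necessarily real. Then $\Phi_u^2=I_{n+1}$ forces $\mu^2=1$, hence $\mu=\pm1$. The case $\mu=-1$ is excluded because the $J$-orthogonal complement of a negative line is positive definite, so it contains no negative vector $(z,1)^{\mathrm t}$.

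\textbf{Edge case $u=0$.} Here $s=1$ and $(u,1-s)^{\mathrm t}$ is the zero vector; your value $-2s(s-1)$ is then $0$. So that eigenvector claim needs $u\neq0$, and for $u=0$ the $(-1)$-eigenspace is just $\{(v,0)^{\mathrm t}\}$.
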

	
	\begin{proof}
		(i) Since $A_u 0 = 0$, we have $\Phi_u(0) = u \cdot 1^{-1} = u$.
		For $\Phi_u(u)$, note that $A_u u = u$ by Lemma~\ref{lemmaAu}(i),
		and $\langle u,u\rangle = |u|^2$, so
		$\Phi_u(u) = (u - u)(1 - |u|^2)^{-1} = 0$.
		
		(ii) We verify that $\Phi_u$ is an involution by a direct matrix computation.
		Using the block form
		\[
		\Phi_u = \frac1s \begin{pmatrix} -A_u & u \\ -u^* & 1 \end{pmatrix},
		\]
		we obtain
		\[
		\Phi_u^2 = \frac1{s^2}
		\begin{pmatrix}
			A_u^2 - u u^* & -A_u u + u \\
			u^* A_u - u^* & -u^* u + 1
		\end{pmatrix}.
		\]
		Now $A_u u = u$, $u^* A_u = (A_u u)^* = u^*$, and
		$u^* u = |u|^2 = 1-s^2$.  Moreover, by Lemma~\ref{lemmaAu}(iii),
		$A_u^2 - u u^* = s^2 I_n$.  Hence
		\[
		\Phi_u^2 = \frac1{s^2}
		\begin{pmatrix}
			s^2 I_n & 0 \\
			0 & 1 - (1-s^2)
		\end{pmatrix}
		= I_{n+1}.
		\]
		Thus $\Phi_u \circ \Phi_u = \operatorname{id}$.
		
		(iii) Using $A_u z = \frac{1}{1+s}u\langle u,z\rangle + s z$ and
		decomposing $z$ into components parallel and perpendicular to $u$,
		one obtains after simplification
		\[
		|u - A_u z|^2 = |1 - \langle z,u\rangle|^2 - (1-|u|^2)(1-|z|^2).
		\]
		Dividing by $|1 - \langle z,u\rangle|^2$ gives the formula.
		
		(iv) From (iii), $|\Phi_u(z)| < 1 \iff |z| < 1$, so $\Phi_u$
		maps $\BH$ bijectively onto itself and extends continuously.
		
		(v) This is exactly the matrix identity $\Phi_u^2 = I_{n+1}$ proved in (ii).
		
		(vi) The eigenvector relations follow by direct computation using
		$A_u u = u$ and $A_u v = s v$ for $v \perp u$.
	\end{proof}
	
	\begin{remark}
		With respect to the Bergman metric,
		$\Phi_u$ acts as an isometry of $\BH$, since it belongs to
		$\mathrm{Sp}(n,1)$.
	\end{remark}

	\begin{theorem}[Intertwining relation]\label{thm:intertwining}
		For every $g \in \mathrm{Sp}(n,1)$ and every $c \in \BH$,
		\[
		\Phi_{g(c)} \circ g = U \circ \Phi_c,
		\]
		where $U \in \mathrm{Sp}(n) \times \mathrm{Sp}(1)$ is a linear isometry
		of $\BH$ fixing the origin.
	\end{theorem}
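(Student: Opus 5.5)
The idea is to show that $h := \Phi_{g(c)} \circ g \circ \Phi_c$ is an element of $\mathrm{Sp}(n,1)$ fixing the origin. Then I will identify the stabilizer of $0$ in $\BH$ with $K=\mathrm{Sp}(n)\times\mathrm{Sp}(1)$, up to the projective ambiguity $\pm I_{n+1}$. Since $\Phi_c\circ\Phi_c=\operatorname{id}$ by Proposition~\ref{prop:Phi-prop}(ii), setting $U:=h$ gives $\Phi_{g(c)}\circ g = U\circ\Phi_c$ directly.

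First, $h\in\mathrm{Sp}(n,1)$, because $\Phi_c$, $g$ and $\Phi_{g(c)}$ all lie in $\mathrm{Sp}(n,1)$ by \eqref{defphi}. Next, by Proposition~\ref{prop:Phi-prop}(i), $\Phi_c(0)=c$, then $g(c)$ is sent to $0$ by $\Phi_{g(c)}$. Hence $h(0)=0$.

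It remains to show that any $h=\begin{pmatrix} A & \alpha\\ \beta & a\end{pmatrix}\in\mathrm{Sp}(n,1)$ with $h(0)=0$ is block diagonal with $A\in\mathrm{Sp}(n)$ and $a\in\mathrm{Sp}(1)$.
\begin{itemize}
\item From $h(0)=\alpha a^{-1}=0$ we get $\alpha=0$. Note that $a\neq 0$ since $|a|\ge1$.
\item The relation $|a|^2=|\alpha|^2+1$ recorded after \eqref{ginv} gives $|a|=1$.
\item The relation $|a|^2=|\beta|^2+1$ then gives $\beta=0$.
\item With $\alpha=\beta=0$, the identity $h^*Jh=J$ reduces to $A^*A=I_n$ and $\bar a a=1$. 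So $A\in\mathrm{Sp}(n)$ (a left inverse of a square quaternionic matrix is two-sided, hence $AA^*=I_n$), and $a\in\mathrm{Sp}(1)$.
\end{itemize}
Its action is $z\mapsto Aza^{-1}$, which is a linear isometry of $\BH$ fixing the origin.

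The only delicate point I expect is the bookkeeping: equalities of maps on $\BH$ versus equalities of matrices in $\mathrm{Sp}(n,1)$. The map identity only determines $h$ up to sign, but $-I_{n+1}\in K$, so this is harmless. I will also remark that $U$ acts by $z\mapsto Aza^{-1}$ rather than by left multiplication alone, because of the right-module convention. Everything else follows from the already established Proposition~\ref{prop:Phi-prop} and the relations following \eqref{ginv}.
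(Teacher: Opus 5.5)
Your proof is correct and follows the paper's route. Like the paper, you form $U=\Phi_{g(c)}\circ g\circ\Phi_c$, check that $U(0)=0$, conclude $U\in\mathrm{Sp}(n)\times\mathrm{Sp}(1)$, and use $\Phi_c\circ\Phi_c=\mathrm{id}$. The one difference is that you verify the stabilizer of the origin directly from the block relations after \eqref{ginv}, while the paper simply cites the homogeneous-space identification \eqref{eq:homogeneous}; your version is slightly more careful because it uses that $U$ lies in $\mathrm{Sp}(n,1)$, not merely that it is an isometry.
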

	
	\begin{proof}
		Consider the map $U = \Phi_{g(c)} \circ g \circ \Phi_c$.
		Using Proposition~\ref{prop:Phi-prop}(i) and the fact that
		$\Phi_c = \Phi_c^{-1}$,
		\[
		U(0) = \Phi_{g(c)}\bigl(g(\Phi_c(0))\bigr)
		= \Phi_{g(c)}(g(c)) = 0.
		\]
		Thus $U$ fixes the origin.  Since each factor is an isometry,
		$U$ is an isometry fixing the origin.  By the homogeneous space
		identification \eqref{eq:homogeneous}, the isotropy group of the
		origin is $\mathrm{Sp}(n) \times \mathrm{Sp}(1)$. Therefore $U \in \mathrm{Sp}(n) \times \mathrm{Sp}(1)$ is a linear
		isometry, i.e.\ it acts by a linear transformation on the ball model.
		Composing with $\Phi_c$ on the right yields the stated relation.
	\end{proof}
	
	\subsection{Jacobian and invariance of the measure}
	
	\begin{proposition}[Jacobian of $\Phi_u$]\label{prop:Jacobian}
		The real Jacobian determinant of $\Phi_u : \BH \to \BH$
		satisfies
		\[
		|\det D_{\mathbb{R}}\Phi_u(z)|
		= \frac{(1-|u|^2)^{2n+2}}{|1 - \langle z, u \rangle|^{4n+4}}.
		\]
	\end{proposition}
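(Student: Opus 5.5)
Since $\Phi_u\in\mathrm{Sp}(n,1)$ is an isometry of the Bergman metric \eqref{eq:ball-metric}, it preserves the Riemannian volume form $\dLam$. The plan is to extract the Jacobian from this invariance rather than by differentiating the quaternionic fractional map \eqref{eq:Phi-def} directly. By \eqref{eq:hyp-measure}, $\dLam(q)=4^{2n}(1-|q|^2)^{-(2n+2)}d\lambda(q)$. Invariance $\Phi_u^*\dLam=\dLam$ then reads
\[
\frac{|\det D_{\mathbb{R}}\Phi_u(z)|}{(1-|\Phi_u(z)|^2)^{2n+2}}=\frac{1}{(1-|z|^2)^{2n+2}} .
\]
Substituting Proposition~\ref{prop:Phi-prop}(iii), namely $1-|\Phi_u(z)|^2=(1-|u|^2)(1-|z|^2)/|1-\langle z,u\rangle|^2$, the factors $(1-|z|^2)^{2n+2}$ cancel. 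This leaves exactly $(1-|u|^2)^{2n+2}/|1-\langle z,u\rangle|^{4n+4}$.

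The one point to justify is that the Riemannian volume is invariant under an isometry, with the absolute value of the real Jacobian. This holds because an isometry pulls back $g_{\RR}$ to itself, so $\det g_{\RR}(\Phi_u(z))\,(\det D_{\RR}\Phi_u)^2=\det g_{\RR}(z)$. The determinant $\det g_{\RR}=4^{4n}/(1-r^2)^{4(n+1)}$ was computed in Section~\ref{sec:metric-measure}. Taking square roots gives the displayed identity directly, without any measure-theoretic change of variables.

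The main obstacle is making sure the isometry claim is honest. The Remark only asserts that $\Phi_u$ is an isometry for the Bergman metric because it lies in $\mathrm{Sp}(n,1)$, so I will check that the projective action preserves \eqref{eq:metric-line}. This is routine: the line element is built from $\langle\cdot,\cdot\rangle$, and it is invariant under replacing $\mathbf z$ by $\mathbf z\lambda$ for $\lambda\in\HH^*$. Invariance under $\mathbf z\mapsto g\mathbf z$ is immediate from $g^*Jg=J$, and invariance under rescaling follows by a direct check using the sesquilinearity rule $\langle z\lambda,w\mu\rangle=\overline\mu\langle z,w\rangle\lambda$.

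As a sanity check, at $z=0$ the formula gives $(1-|u|^2)^{2n+2}=s^{4n+4}$. I would verify this independently: differentiating \eqref{eq:Phi-def} at $0$ gives $D\Phi_u(0)h=-A_uh+u\,u^*h$. On $u^\perp$ this acts by $-s$ (real dimension $4n-4$). On the quaternionic line $u\HH$ it acts as $h\mapsto -h+u u^*h$, i.e.\ $u\mu\mapsto -s^2u\mu$ (real dimension $4$). The determinant is therefore $s^{4n-4}s^{8}=s^{4n+4}$, matching the formula and confirming the approach.
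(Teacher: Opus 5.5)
Your argument is the same as the paper's. You write the $\Phi_u$-invariance of $\dLam$ in Cartesian coordinates via \eqref{eq:hyp-measure} and substitute the norm identity of Proposition~\ref{prop:Phi-prop}(iii). Your justification via $\det g_{\RR}(\Phi_u(z))\,(\det D_{\RR}\Phi_u)^2=\det g_{\RR}(z)$ and your check $s^{4n-4}\cdot s^{8}=s^{4n+4}$ at $z=0$ are correct additions that the paper omits.

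One caution about your side check that $\Phi_u$ is an isometry. Invariance under $\mathbf z\mapsto\mathbf z\lambda$ holds only if the prefactor in \eqref{eq:metric-line} is $-4/\langle\mathbf z,\mathbf z\rangle^2$. That is the standard form, and the one that reproduces \eqref{eq:ball-metric} when $\mathbf z=(z,1)$. With the single power as displayed, the expression scales by $|\lambda|^2$, so your check would fail against the paper's formula as printed.
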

	\begin{proof}
		Since $\Phi_u \in \mathrm{Sp}(n,1)$ is an isometry of the Bergman metric,
		it preserves the Riemannian volume form.  In particular, the hyperbolic
		measure $\dLam$ defined in~\eqref{eq:hyp-measure} is invariant under
		$\Phi_u$, i.e.\ $\Phi_u^*(\dLam) = \dLam$.  Writing this in Cartesian
		coordinates and cancelling the common factor $4^{2n}d\lambda(z)$ gives
		\[
		\frac{|\det D_{\mathbb{R}}\Phi_u(z)|}{(1 - |\Phi_u(z)|^2)^{2n+2}}
		= \frac{1}{(1 - |z|^2)^{2n+2}}.
		\]
		Using the norm relation from Proposition~\ref{prop:Phi-prop}(iii),
		\[
		1 - |\Phi_u(z)|^2
		= \frac{(1-|u|^2)(1-|z|^2)}{|1 - \langle z, u \rangle|^2},
		\]
		we solve for the Jacobian:
		\begin{align*}
			|\det D_{\mathbb{R}}\Phi_u(z)|
			&= \left( \frac{1 - |\Phi_u(z)|^2}{1 - |z|^2} \right)^{2n+2} \\
			&= \left( \frac{1-|u|^2}{|1 - \langle z, u \rangle|^2} \right)^{2n+2} \\
			&= \frac{(1-|u|^2)^{2n+2}}{|1 - \langle z, u \rangle|^{4n+4}}.
		\end{align*}
	\end{proof}

	\begin{corollary}[Invariance of the hyperbolic measure]\label{cor:inv-measure}
		The hyperbolic measure $\dLam$ is invariant under every isometry
		$g \in \mathrm{Sp}(n,1)$.  In particular, for any measurable set
		$E \subset \BH$,
		\[
		\Lambda_{\HH}(\Phi_u(E)) = \Lambda_{\HH}(E), \qquad
		\Lambda_{\HH}(g(E)) = \Lambda_{\HH}(E).
		\]
	\end{corollary}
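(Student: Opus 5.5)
The plan is to read the invariance off the Riemannian structure and then pass to sets via the change-of-variables formula; the Jacobian computation of Proposition~\ref{prop:Jacobian} is the template for a general $g$. Every $g\in\mathrm{Sp}(n,1)$ acts on $\BH$ as an isometry of the Bergman metric \eqref{eq:ball-metric}. It therefore preserves the Riemannian volume form. By \eqref{eq:hyp-measure} and the discussion preceding it, that volume form is exactly $\dLam$. Hence $g^*(\dLam)=\dLam$ as densities on $\BH$, which in Cartesian coordinates reads
\[
|\det D_{\mathbb{R}} g(z)|\,(1-|g(z)|^2)^{-(2n+2)} = (1-|z|^2)^{-(2n+2)} .
\]

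Next I pass from densities to sets. The map $g$ is a real-analytic diffeomorphism of $\BH$, with inverse $g^{-1}$ given by \eqref{ginv}. So for any measurable $E\subset\BH$ the change-of-variables formula gives
\[
\Lambda_{\HH}(g(E))=\int_{g(E)}\frac{4^{2n}\,d\lambda(w)}{(1-|w|^2)^{2n+2}}
=\int_E\frac{4^{2n}\,|\det D_{\mathbb{R}}g(z)|}{(1-|g(z)|^2)^{2n+2}}\,d\lambda(z).
\]
Substituting the pointwise identity from the first paragraph turns the right-hand side into $\Lambda_{\HH}(E)$. The statement for $\Phi_u$ is then the special case $g=\Phi_u$, which lies in $\mathrm{Sp}(n,1)$ by \eqref{defphi}. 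For this case one may alternatively use the explicit Jacobian of Proposition~\ref{prop:Jacobian} together with Proposition~\ref{prop:Phi-prop}(iii) directly.

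The only point that needs care is the claim that $g$ is a Bergman isometry, i.e.\ that it preserves \eqref{eq:ball-metric}. I would justify this from the projective line element \eqref{eq:metric-line}. That expression is invariant under $\mathbf z\mapsto A\mathbf z$ for $A\in\mathrm{Sp}(n,1)$, because $A$ preserves $\langle\cdot,\cdot\rangle$. It is also invariant under right rescaling $\mathbf z\mapsto \mathbf z\mu$ of the lift, which is what makes it well defined on $P(V_-)$. The identification of \eqref{eq:metric-line} with \eqref{eq:ball-metric} is taken from \cite{cao16}.

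If one prefers to avoid \eqref{eq:metric-line}, an alternative route is available. The distance formula \eqref{eq:dist-proj} is manifestly $\mathrm{Sp}(n,1)$-invariant, and a distance-preserving map between Riemannian manifolds is a smooth Riemannian isometry by Myers--Steenrod. That again yields preservation of the volume form. I expect no real obstacle here beyond this bookkeeping.
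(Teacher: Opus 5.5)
Your proof is correct, but it is organized differently from the paper's.

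\textbf{The paper's route.} It first treats $\Phi_u$: it substitutes the Jacobian of Proposition~\ref{prop:Jacobian} and the norm identity of Proposition~\ref{prop:Phi-prop}(iii) into the Cartesian density to obtain $\Phi_u^*(\dLam)=\dLam$. It then reaches an arbitrary $g$ through the factorization $g=U\circ\Phi_c$ with $c=g^{-1}(0)$. Here $U=\mathrm{diag}(A,q)$ acts by the Euclidean-orthogonal map $z\mapsto Az\bar q$, which preserves $d\lambda$ and $|z|$.

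\textbf{Your route.} You apply ``Riemannian isometries preserve Riemannian volume'' to every $g$ at once. You then pass from densities to sets by change of variables, a step the paper leaves implicit.

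\textbf{What each buys.} Your route needs neither the factorization through $\Phi_c$ nor the description of the stabilizer of $0$. It also avoids a circularity in the paper. The proof of Proposition~\ref{prop:Jacobian} already assumes $\Phi_u^*(\dLam)=\dLam$, so the paper's computation for $\Phi_u$ only re-derives its own input. The real content of both proofs is the isometry fact, which you state explicitly. For the same reason, your remark that one may ``alternatively'' use Proposition~\ref{prop:Jacobian} does not give an independent proof. The paper's reduction to $\Phi_u$ and orthogonal maps would only pay off if $\det D_{\mathbb{R}}\Phi_u$ were computed directly; that would give a proof never mentioning the volume form.

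\textbf{A caveat on the isometry step.} Invariance under right rescaling holds for the standard line element with $\langle\mathbf z,\mathbf z\rangle^{2}$ in the denominator. That is also the form which reduces to \eqref{eq:ball-metric} on the lift $(z,1)$. As printed, \eqref{eq:metric-line} has only the first power. It then picks up a factor $|\mu|^2$ under $\mathbf z\mapsto\mathbf z\mu$, and it is not even positive.

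Since the ball action involves the point-dependent factor $\mu=(\beta z+a)^{-1}$, the fact you need is the following. The quantity $\langle\mathbf z,\mathbf z\rangle\langle d\mathbf z,d\mathbf z\rangle-|\langle d\mathbf z,\mathbf z\rangle|^2$ is multiplied by $|\mu|^4$ under $\mathbf z\mapsto\mathbf z\mu$ even when $\mu$ varies, because the $d\mu$ terms cancel. The squared denominator compensates exactly for this.

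Your distance-formula alternative avoids the issue. It does not even need Myers--Steenrod, since $g$ is already real-analytic. A smooth map preserving the Riemannian distance preserves lengths of tangent vectors, and hence the metric.
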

	
	\begin{proof}
		Using Proposition~\ref{prop:Jacobian} and the norm relation,
		we compute the pullback:
		\begin{align*}
			\Phi_u^*(\dLam)(z)
			&= \frac{4^{2n}\,|\det D_{\mathbb{R}}\Phi_u(z)| \, d\lambda(z)}
			{(1-|\Phi_u(z)|^2)^{2n+2}} \\[4pt]
			&= \frac{4^{2n}\,(1-|u|^2)^{2n+2}}{|1 - \langle z, u \rangle|^{4n+4}}
			\cdot \frac{|1 - \langle z, u \rangle|^{2(2n+2)}}
			{(1-|u|^2)^{2n+2}(1-|z|^2)^{2n+2}}
			\, d\lambda(z) \\[4pt]
			&= \frac{4^{2n}\,d\lambda(z)}{(1-|z|^2)^{2n+2}}
			= \dLam(z).
		\end{align*}
		Thus $\dLam$ is $\Phi_u$-invariant.  Since every $g$ factors as
		$g = U \circ \Phi_c$ with $U \in \mathrm{Sp}(n) \times \mathrm{Sp}(1)$
		linear, and both $d\lambda$ and $(1-|z|^2)$ are
		$U$-invariant, the measure is invariant under the whole group.
	\end{proof}
	
	\subsection{The Poisson kernel identity}
	
	\begin{lemma}[Quaternionic Poisson kernel]\label{lem:poisson-kernel}
		For any $x, y \in \BH$,
		\[
		\cosh^2\!\Bigl(\frac{d_H(x,y)}{2}\Bigr) 
		= \frac{|1 - \langle x, y \rangle|^2}
		{(1-|x|^2)(1-|y|^2)}.
		\]
		Consequently,
		\[
		\log\cosh^2\!\Bigl(\frac{d_H(x,y)}{2}\Bigr)
		= \log|1 - \langle x, y \rangle|^2 - \log(1-|x|^2) - \log(1-|y|^2).
		\]
	\end{lemma}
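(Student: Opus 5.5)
The plan is to reduce to the case $x=0$ using the involution $\Phi_y$ and the distance formula \eqref{eq:dist-Phi}, and then read everything off the norm identity in Proposition~\ref{prop:Phi-prop}(iii).

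\emph{Step 1: an identity for $\rho=d_H(0,p)$.} By \eqref{eq:dist-Phi} with $q=0$ we have $d_H(p,0)=\log\frac{1+|p|}{1-|p|}$, using $\Phi_0(p)=-p$. Hence $\tanh(\rho/2)=|p|$, and therefore
\[
\cosh^2(\rho/2)=\frac{1}{1-\tanh^2(\rho/2)}=\frac{1}{1-|p|^2}.
\]

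\emph{Step 2: reduction to the origin.} For general $x,y$, the map $\Phi_y$ is an isometry with $\Phi_y(y)=0$. So $d_H(x,y)=d_H(\Phi_y(x),0)$, which is exactly \eqref{eq:dist-Phi}. Step~1 then gives
\[
\cosh^2\!\Bigl(\frac{d_H(x,y)}{2}\Bigr)=\frac{1}{1-|\Phi_y(x)|^2}.
\]
Apply Proposition~\ref{prop:Phi-prop}(iii) with $u=y$ and $z=x$:
\[
1-|\Phi_y(x)|^2=\frac{(1-|y|^2)(1-|x|^2)}{|1-\langle x,y\rangle|^2}.
\]
Inverting this yields the claimed formula, since $|1-\langle x,y\rangle|=|1-\langle y,x\rangle|$ because $\langle y,x\rangle=\overline{\langle x,y\rangle}$. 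The formula is therefore symmetric, and the order of the arguments in the statement causes no trouble. Note that $|\langle x,y\rangle|\le|x||y|<1$ by Cauchy--Schwarz for the quaternionic Hermitian product, so $1-\langle x,y\rangle\neq0$ and the right-hand side is finite and positive.

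\emph{Step 3: the logarithmic form.} Take logarithms of the identity from Step~2. All three factors are positive real numbers, so the splitting into $\log|1-\langle x,y\rangle|^2-\log(1-|x|^2)-\log(1-|y|^2)$ is immediate.

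\emph{Main obstacle.} The delicate point is Step~1. One must make sure that \eqref{eq:dist-Phi} gives the correct normalization, so that the distance to the origin is $\log\frac{1+r}{1-r}=2\arctanh r$, consistent with the metric \eqref{eq:ball-metric}. This can be confirmed from Proposition~\ref{prop:geod-origin}: along $\tanh(t/2)v$ the parameter is arc length. A radial check of \eqref{eq:ball-metric} gives the same answer, since
\[
ds^2=\frac{4\,dr^2}{1-r^2}+\frac{4r^2\,dr^2}{(1-r^2)^2}=\frac{4\,dr^2}{(1-r^2)^2},
\]
whose integral is $2\arctanh r$.

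As a cross-check, one could instead use \eqref{eq:dist-proj} with the lifts $(x,1)$ and $(y,1)$. Then $\langle\mathbf x,\mathbf y\rangle=\langle x,y\rangle-1$ and $\langle\mathbf x,\mathbf x\rangle=|x|^2-1$, which gives the same expression directly.
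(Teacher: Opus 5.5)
Your argument is correct, but it is not the route the paper takes.

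\textbf{The paper's route.} The paper does not use the involution at all. It lifts $x,y$ to $(x,1)^{\mathrm t},(y,1)^{\mathrm t}$, computes $\langle\mathbf x,\mathbf x\rangle=-(1-|x|^2)$ and $\langle\mathbf x,\mathbf y\rangle=\langle x,y\rangle-1$, and substitutes into the projective distance formula \eqref{eq:dist-proj}. That is precisely what you relegate to a cross-check.

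\textbf{Your route.} You reduce to the origin via $d_H(x,y)=d_H(\Phi_y(x),0)$ and read the answer off Proposition~\ref{prop:Phi-prop}(iii). That identity is only sketched in the paper, but it does hold. Write $z=u\alpha+z_\perp$ with $u^*z_\perp=0$; then $|u-A_uz|^2$ and $|1-\langle z,u\rangle|^2-(1-|u|^2)(1-|z|^2)$ both equal $|u|^2|1-\alpha|^2+s^2|z_\perp|^2$. Your route has the merit of exhibiting $|\Phi_y(x)|=\tanh\frac{d_H(x,y)}{2}$, which the paper uses later in Example~\ref{ex:two-weighted}.

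\textbf{The weak point.} Your route rests on the origin formula $d_H(0,p)=\log\frac{1+|p|}{1-|p|}$, which the paper merely asserts when deriving \eqref{eq:dist-Phi}. You are right to flag this, but both of your justifications need care.
\begin{itemize}
\item Citing Proposition~\ref{prop:geod-origin} is circular, since its proof uses this very formula.
\item The radial computation from \eqref{eq:ball-metric} only gives the \emph{length} of the radial segment, not the distance. To close this, note that along any curve, with $r=|q|$, one has $|dq|\ge|dr|$ and $\bigl|\sum_i\overline{q_i}\,dq_i\bigr|\ge\bigl|\operatorname{Re}\sum_i\overline{q_i}\,dq_i\bigr|=r\,|dr|$. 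Hence $ds\ge 2|dr|/(1-r^2)$ along every curve, so the radial segment is minimizing.
\end{itemize}
More simply, put $x=0$ in \eqref{eq:dist-proj} to get $\cosh^2(\rho/2)=1/(1-|p|^2)$ at once. Once you do that, though, the paper's direct computation for general $x,y$ is shorter than the detour through $\Phi_y$.
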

	
	\begin{proof}
		Lift $x,y$ to $\mathbf{x}=(x,1)^{\mathrm t}$, $\mathbf{y}=(y,1)^{\mathrm t}$ in $\bh^{n,1}$.
		Using the Hermitian form~\eqref{eq:herm-form} we obtain
		\[
		\langle\mathbf{x},\mathbf{x}\rangle = -(1-|x|^2),\quad
		\langle\mathbf{y},\mathbf{y}\rangle = -(1-|y|^2),\quad
		\langle\mathbf{x},\mathbf{y}\rangle = \langle x,y\rangle-1,\quad
		\langle\mathbf{y},\mathbf{x}\rangle = \langle y,x\rangle-1.
		\]
		Because $\overline{\langle x,y\rangle} = \langle y,x\rangle$,
		\[
		\langle\mathbf{x},\mathbf{y}\rangle\langle\mathbf{y},\mathbf{x}\rangle
		= (1-\langle x,y\rangle)(1-\langle y,x\rangle) = |1-\langle x,y\rangle|^2.
		\]
		Substituting these into the distance formula~\eqref{eq:dist-proj} gives
		\[
		\cosh^2\frac{d_H(x,y)}{2}
		= \frac{ \langle\mathbf{x},\mathbf{y}\rangle\langle\mathbf{y},\mathbf{x}\rangle }
		{ \langle\mathbf{x},\mathbf{x}\rangle\langle\mathbf{y},\mathbf{y}\rangle }
		= \frac{|1-\langle x,y\rangle|^2}{(1-|x|^2)(1-|y|^2)} .
		\]
		The logarithmic identity follows immediately.
	\end{proof}
	
	\section{The quaternionic conformal barycenter}\label{sec:barycenter}
	
	\subsection{Barycenter and energy functional}
	
	In the following, we identify $\bh^n$ with $\RR^{4n}$ via the standard
	basis.  The integral of a quaternion-valued function is interpreted as
	the Bochner integral in this real vector space.  In particular, for
	any measurable map $f: D \to \bh^n$, the integral $\int_D f(q)\,\dLam(q)$
	is a well-defined vector in $\bh^n \cong \RR^{4n}$ whenever
	$\int_D |f(q)|\,\dLam(q) < \infty$.
	
	\begin{definition}\label{def:barycenter}
		Let $D \subseteq \BH$ be a measurable set with
		$0 < \Lambda_{\HH}(D) < \infty$.  
		A point $c \in \BH$ is called a \emph{quaternionic conformal barycenter} of $D$ if it satisfies
		\begin{equation}\label{eq:bary-def}
			\int_D \Phi_c(q) \; \dLam(q) = \mathbf{0} \in \bh^n \cong \RR^{4n}.
		\end{equation}
		When such a point exists and is unique (see Theorem~\ref{thm:existence}), we denote it by $c(D)$.
	\end{definition}
	
	\begin{definition}\label{def:energy}
		For the same $D$, we assume additionally that $\int_D d_H(0,y)\,\dLam(y) < \infty$.
		The \emph{energy functional} is then defined and finite on $\BH$ by
		\begin{equation}\label{eq:energy-def}
			G(x) = \int_D \log\cosh^2\!\Bigl(\frac12 d_H(x,y)\Bigr) \; \dLam(y).
		\end{equation}
	\end{definition}
	
	Using Lemma~\ref{lem:poisson-kernel}, we can rewrite $G$ as
	\[
	G(x) = \int_D \bigl[ \log|1-\langle x,y\rangle|^2 - \log(1-|x|^2)
	- \log(1-|y|^2) \bigr] \, \dLam(y).
	\]
	The third term is independent of $x$ and, because $\int_D d_H(0,y)\,\dLam(y)<\infty$, it is finite:
	\[
	C_D := -\int_D \log(1-|y|^2)\,\dLam(y) = 2\int_D \log\cosh\frac{d_H(0,y)}2\,\dLam(y) < \infty.
	\]
	
	To see that $G$ is coercive (i.e.\ $G(x)\to+\infty$ as $|x|\to1$), note that
	$\cosh^2(t/2) = \frac14(e^{t/2}+e^{-t/2})^2 \ge \frac14 e^t$, and therefore
	$$\log\cosh^2(t/2) \ge t - 2\log 2, \,\,\forall t\ge 0.$$  Applying it with $t = d_H(x,y)$ gives
	\[
	G(x) \ge \int_D d_H(x,y)\,\dLam(y) - 2\log 2\;\Lambda_{\HH}(D).
	\]
	By the triangle inequality,
	$d_H(x,y) \ge d_H(0,x) - d_H(0,y)$,
	so
	\[
	G(x) \ge d_H(0,x)\,\Lambda_{\HH}(D) - \int_D d_H(0,y)\,\dLam(y)
	- 2\log 2\;\Lambda_{\HH}(D).
	\]
	The integral $\int_D d_H(0,y)\,\dLam(y)$ is finite by assumption, and
	$d_H(0,x)\to\infty$ as $|x|\to1$.  Hence $G(x)\to+\infty$.
	This coercivity, together with continuity, guarantees that $G$
	attains a global minimum in $\BH$, provided $G$ is finite at some point (which is true by the moment assumption).
	
	\begin{remark}
		The moment condition $\int_D d_H(0,y)\,\dLam(y)<\infty$ is essential for the energy functional to be finite and coercive.  Without it $G$ could be identically infinite, and the minimization argument would not apply.  The barycenter could still be defined via the integral equation \eqref{eq:bary-def} for any set of finite hyperbolic measure, but the present proof of existence relies on the energy method.
	\end{remark}
	
	\subsection{Gradient of the energy functional}
	
	\begin{lemma}\label{lem:diff}
		Let $D$ satisfy the above hypotheses.  Then $G$ is differentiable on $\BH$ and
		its gradient can be computed by differentiating under the integral.
	\end{lemma}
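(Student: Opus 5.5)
The plan is to use the decomposition of $G$ from Lemma~\ref{lem:poisson-kernel}:
\[
G(x)=\int_D \log|1-\langle x,y\rangle|^2\,\dLam(y)\;-\;\Lambda_{\HH}(D)\log(1-|x|^2)\;+\;C_D .
\]
The second term is a smooth function of $x$ on $\BH$, since $\Lambda_{\HH}(D)<\infty$. The third term is a finite constant. So everything reduces to the first term,
\[
F(x)=\int_D f(x,y)\,\dLam(y),\qquad f(x,y)=\log|1-\langle x,y\rangle|^2 .
\]
Here $\langle x,y\rangle=y^*x$ is real-linear in $x$. Moreover $|1-\langle x,y\rangle|\ge 1-|x||y|\ge 1-|x|>0$, so $f$ is smooth in $x$ for each fixed $y\in\BH$.

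We will show that $F$ is differentiable by the standard dominated convergence criterion for differentiating under the integral. Fix $x_0\in\BH$ and choose $\rho$ with $|x_0|<\rho<1$. For $x$ in the closed Euclidean ball $\{|x|\le\rho\}$ and any direction $h\in\RR^{4n}$ with $|h|=1$, the directional derivative is
\[
\partial_h f(x,y)= -\frac{2\,\mathrm{Re}\bigl(\overline{(1-\langle x,y\rangle)}\,\langle h,y\rangle\bigr)}{|1-\langle x,y\rangle|^2}.
\]
Its modulus is bounded by
\[
\frac{2|y|}{|1-\langle x,y\rangle|}\le \frac{2}{1-\rho}.
\]
This bound is uniform in $x$ on the neighbourhood and in $y\in D$, and it is integrable over $D$ because $\Lambda_{\HH}(D)<\infty$. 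The same constant also controls $|f(x,y)-f(x',y)|\le \frac{2}{1-\rho}|x-x'|$ by the mean value theorem.

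Continuity of $x\mapsto\partial_h f(x,y)$ for each $y$, together with this uniform domination, then gives the conclusions. Dominated convergence shows that $\nabla F(x)=\int_D\nabla_x f(x,y)\,\dLam(y)$ exists, that it is continuous in $x$, and hence that $F$ is $C^1$ with gradient obtained by differentiating under the integral. Adding back the smooth term $-\Lambda_{\HH}(D)\log(1-|x|^2)$, whose gradient is $\frac{2\Lambda_{\HH}(D)\,x}{1-|x|^2}$, we get the same statement for $G$. The moment assumption is used only to make $G$ finite, through the finiteness of $C_D$; it plays no role in the derivative estimates.

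The only potentially delicate point is getting a domination that does not depend on $y$ even when $y$ approaches the boundary sphere. This is resolved by the observation that only $x$, not $y$, needs to stay away from $\partial\BH$: the bound $|1-\langle x,y\rangle|\ge 1-|x|$ holds for every $y\in\BH$. I would also note the non-commutativity issue in writing $\partial_h f$: it is handled by treating $|1-\langle x,y\rangle|^2$ as a real quadratic polynomial in the real coordinates of $x$, so ordinary real calculus applies.
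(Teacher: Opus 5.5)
Your proposal is correct and follows essentially the same route as the paper: you restrict to a closed ball $|x|\le\rho<1$, use $|1-\langle x,y\rangle|\ge 1-\rho$ to get a $y$-independent bound on the $x$-derivatives of $\log|1-\langle x,y\rangle|^2$, and apply dominated convergence using only $\Lambda_{\HH}(D)<\infty$. Your explicit formula for $\partial_h f$ with the constant $2/(1-\rho)$ makes precise the paper's looser appeal to real analyticity. Splitting off $-\Lambda_{\HH}(D)\log(1-|x|^2)$ and $C_D$ before differentiating is only a cosmetic difference.
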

	
	\begin{proof}
		Fix $x \in \BH$ and choose $r$ with $|x| < r < 1$.  For every $y\in D$
		and every $x'$ with $|x'|\le r$ we have
		\[
		|1-\langle x',y\rangle| \ge 1 - |x'||y| \ge 1-r > 0.
		\]
		The integrand of $G$ can be written as
		\begin{align*}
			F(x,y) &= \log\cosh^2\!\Bigl(\frac12 d_H(x,y)\Bigr) \\
			&= \log\frac{|1-\langle x,y\rangle|^2}{(1-|x|^2)(1-|y|^2)}.
		\end{align*}
		For $|x|\le r$, the term $\log(1-|x|^2)$ is smooth with bounded
		derivatives.  The function $(x,y)\mapsto \log|1-\langle x,y\rangle|^2$
		is real analytic in $x$ on $\overline{B}_r$ uniformly with respect to
		$y$; its partial derivatives are rational combinations of the components
		of $x$ and $y$ with denominators $1-\langle x,y\rangle$ and
		$1-\langle y,x\rangle$, and are therefore bounded by a constant $C(r)$
		that does {\it not}  depend on $y$.  Consequently, for every
		$k=1,\dots,4n$,
		\[
		\Bigl|\frac{\partial}{\partial x_k}F(x,y)\Bigr| \le C(r)
		\qquad\text{for all }|x|\le r,\; y\in D.
		\]
		Since $\Lambda_{\HH}(D)<\infty$, the constant function $C(r)$ is
		integrable over $D$ with respect to $\dLam$.  The dominated convergence
		theorem now justifies differentiation under the integral sign for
		$x\in B_r$, and because $r<1$ was arbitrary, $G$ is differentiable on
		all of $\BH$.
	\end{proof}
	
	\begin{proposition}[Gradient formula]\label{prop:gradient}
		The gradient of $G$ with respect to real coordinates satisfies
		\[
		\nabla G(0) = -2 \int_D y \; \dLam(y).
		\]
		Moreover, for any $c \in \BH$,
		\[
		\nabla G_c(0) = -2 \int_D \Phi_c(y) \; \dLam(y),
		\]
		where $G_c(x) = G(\Phi_c(x))$ and $\nabla$ denotes the Euclidean
		gradient on $\bh^n \cong \RR^{4n}$ at the origin.
	\end{proposition}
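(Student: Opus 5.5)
The plan is to prove the first formula by differentiating the integrand at $x=0$ directly, and then to reduce the second formula to the first. The reduction uses two facts already established: $\Phi_c$ is an isometric involution, and it preserves $\dLam$ (Corollary~\ref{cor:inv-measure}).

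\emph{Step 1: the formula at the origin.} By Lemma~\ref{lem:diff} we may differentiate under the integral sign. By Lemma~\ref{lem:poisson-kernel} the integrand is
\[
F(x,y)=\log|1-\langle x,y\rangle|^2-\log(1-|x|^2)-\log(1-|y|^2).
\]
The last term does not depend on $x$. The term $-\log(1-|x|^2)$ has Euclidean gradient $2x/(1-|x|^2)$, which vanishes at $x=0$. For the first term we expand
\[
|1-\langle x,y\rangle|^2 = 1-2\,\mathrm{Re}\langle x,y\rangle+|\langle x,y\rangle|^2 .
\]
The directional derivative at $x=0$ in a direction $h\in\bh^n$ is therefore $-2\,\mathrm{Re}\langle h,y\rangle = -2\,\mathrm{Re}(y^*h)$. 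The one point to check here is that $\mathrm{Re}(y^*h)=\sum_i \mathrm{Re}(\overline{y_i}h_i)$ is exactly the Euclidean inner product of $h$ and $y$ under the identification $\bh^n\cong\RR^{4n}$. This holds because $\mathrm{Re}(\bar p q)$ is the standard dot product on $\bh\cong\RR^4$. Hence $\nabla_x F(0,y)=-2y$, and integrating over $D$ gives $\nabla G(0)=-2\int_D y\,\dLam(y)$.

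\emph{Step 2: reduction for general $c$.} Since $\Phi_c$ is an isometry with $\Phi_c^{-1}=\Phi_c$ (Proposition~\ref{prop:Phi-prop}), we have $d_H(\Phi_c(x),y)=d_H(x,\Phi_c(y))$. Therefore $F(\Phi_c(x),y)=F(x,\Phi_c(y))$. Changing variables $w=\Phi_c(y)$ and using the invariance $\Phi_c^*\dLam=\dLam$ gives
\[
G_c(x)=\int_{D'} \log\cosh^2\!\Bigl(\tfrac12 d_H(x,w)\Bigr)\,\dLam(w),\qquad D'=\Phi_c(D).
\]
So $G_c$ is exactly the energy functional of $D'$.

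\emph{Step 3: checking the hypotheses for $D'$ and concluding.} To apply Step 1 to $D'$ we must check that $D'$ satisfies the hypotheses.
\begin{itemize}
\item[(a)] Measure: $\Lambda_{\HH}(D')=\Lambda_{\HH}(D)$ by Corollary~\ref{cor:inv-measure}, so $0<\Lambda_{\HH}(D')<\infty$.
\item[(b)] First moment: by the triangle inequality and $\Phi_c(c)=0$,
\[
d_H(0,\Phi_c(y))\le d_H(0,c)+d_H(c,\Phi_c(y)) = d_H(0,c)+d_H(0,y).
\]
Integrating over $D$, the first moment of $D'$ is at most $d_H(0,c)\,\Lambda_{\HH}(D)+\int_D d_H(0,y)\,\dLam(y)<\infty$.
\end{itemize}
Step 1 applied to $D'$ then gives
\[
\nabla G_c(0)=-2\int_{D'}w\,\dLam(w)=-2\int_D\Phi_c(y)\,\dLam(y),
\]
where the last equality is the same measure-preserving change of variables. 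This integral converges absolutely because $|\Phi_c(y)|<1$ and $\Lambda_{\HH}(D)<\infty$.

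The main obstacle is the noncommutative bookkeeping in Step 1, namely confirming that the quaternionic expression $\mathrm{Re}\langle h,y\rangle$ really is the real Euclidean pairing. Once that is settled, Step 2 is a formal consequence of the isometry and measure invariance already proved.
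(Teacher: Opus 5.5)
Your proposal is correct and follows essentially the same route as the paper. You differentiate the Poisson-kernel form of the integrand at the origin, identifying $\mathrm{Re}\langle h,y\rangle$ with the Euclidean pairing, and then show $G_c$ is the energy functional of $\Phi_c(D)$ via the isometric involution and $\Lambda_{\HH}$-invariance, checking the moment condition by the triangle inequality and changing variables back. The only difference is that you spell out a few details the paper leaves implicit, namely $\Lambda_{\HH}(\Phi_c(D))=\Lambda_{\HH}(D)$ and absolute convergence of the final integral.
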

	
	\begin{proof}
		From the expansion of $G$, at $x=0$ we have
		\begin{align*}
			G(x) &= -\log(1-|x|^2)\Lambda_{\HH}(D) 
			+ \int_D \log|1-\langle x,y\rangle|^2 \, \dLam(y) + \text{const}.
		\end{align*}
		Differentiating under the integral at $x=0$, the first term has
		gradient zero at the origin (since $\nabla_x\log(1-|x|^2)|_{x=0} = 0$),
		and the second term yields
		\[
		\nabla G(0) = \int_D \nabla_x\big|_{x=0} \log|1-\langle x,y\rangle|^2 \, \dLam(y).
		\]
		For fixed $y$, define $g(x) = \log|1-\langle x,y\rangle|^2$.  For $|x|$ sufficiently small,
		$|\langle x,y\rangle|\le |x||y|<1$, so we can expand
		\[
		g(x) = \log\!\bigl(1 - 2\operatorname{Re}\langle x,y\rangle + |\langle x,y\rangle|^2\bigr)
		= -2\operatorname{Re}\langle x,y\rangle + O(|x|^2).
		\]
		When we identify $\bh^n$ with $\RR^{4n}$ via the standard real basis, the map
		$x\mapsto \operatorname{Re}\langle x,y\rangle$ is exactly the Euclidean inner
		product of the real vectors corresponding to $x$ and $y$; hence its Euclidean gradient
		with respect to $x$ is $y$.  Consequently $\nabla g(0) = -2y$. Thus $\nabla G(0) = -2 \int_D y \, \dLam(y)$.
		
		For the second statement, define $G_c(x) = G(\Phi_c(x))$.
		Using the change of variables $y = \Phi_c(w)$ and the invariance of
		$\dLam$ (Corollary~\ref{cor:inv-measure}), together with the fact that
		$\Phi_c$ is an isometry,  we obtain
		\begin{align*}
			G_c(x) &= \int_D \log\cosh^2\!\Bigl(\frac12 d_H(\Phi_c(x), y)\Bigr) \, \dLam(y) \\
			&= \int_{\Phi_c(D)} \log\cosh^2\!\Bigl(\frac12 d_H(\Phi_c(x), \Phi_c(w))\Bigr) \, \dLam(w) \\
			&= \int_{\Phi_c(D)} \log\cosh^2\!\Bigl(\frac12 d_H(x, w)\Bigr) \, \dLam(w).
		\end{align*}
		Thus $G_c$ is the energy functional for the set $\Phi_c(D)$.  If $D$ satisfies the moment condition, so does $\Phi_c(D)$ because $\Phi_c$ is an isometry and $d_H(0,\Phi_c(y)) = d_H(c,y)$, which is comparable to $d_H(0,y)$ (since $d_H(c,y) \le d_H(0,y)+d_H(0,c)$).  Hence $G_c$ is well defined and finite.
		Applying the first part of the proposition to the set $\Phi_c(D)$,
		\[
		\nabla G_c(0) = -2 \int_{\Phi_c(D)} w \, \dLam(w)
		= -2 \int_D \Phi_c(y) \, \dLam(y),
		\]
		where the last equality uses the change of variables $w = \Phi_c(y)$
		and the invariance of $\dLam$ again.
	\end{proof}
	
	\begin{remark}
		At the origin the Bergman metric coincides with the Euclidean metric up to
		a constant factor $4$, so the Riemannian gradient of $G$ at $0$ is
		proportional to the Euclidean gradient computed here.  The proportionality
		constant does not affect the critical point equation.
	\end{remark}
	
	\subsection{Strict geodesic convexity of the energy}\label{sec:strict-convexity}
	
	We now prove the strict geodesic convexity of the energy.
	
	\begin{proposition}\label{prop:convex-energy}
		Assume $D$ satisfies the moment condition $\int_D d_H(0,y)\,\dLam(y)<\infty$.
		Then the energy functional $G$ is strictly geodesically convex on $\BH$.
	\end{proposition}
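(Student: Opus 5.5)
The plan is to reduce strict geodesic convexity of $G$ to a pointwise statement about the integrand. It suffices to show that for every fixed $y\in\BH$ and every unit-speed geodesic $\gamma$, the function
\[
f_y(t)=\log\cosh^2\!\Bigl(\tfrac12 d_H(\gamma(t),y)\Bigr)
\]
is strictly convex in $t$. Integrating over $D$ then gives convexity of $t\mapsto G(\gamma(t))$. Strictness follows because $\Lambda_{\HH}(D)>0$ and the integrand is strictly convex for every $y$. The interchange of the second derivative (or the convexity inequality) with the integral needs no domination: the convexity inequality $f_y(\theta t_1+(1-\theta)t_2)\le \theta f_y(t_1)+(1-\theta)f_y(t_2)$ integrates directly. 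Strict inequality for every $y$ then yields strict inequality for $G$, since $G$ is finite along $\gamma$ by the moment condition.

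For the pointwise claim I use homogeneity. By Theorem~\ref{thm:intertwining}, Corollary~\ref{cor:inv-measure} and Proposition~\ref{prop:geod-origin}, an isometry in $\mathrm{Sp}(n,1)$ moves any geodesic to $\gamma(t)=\tanh(t/2)v$ through the origin, with $|v|=1$. Isometries preserve $d_H$, so I may assume $\gamma$ has this form. By Lemma~\ref{lem:poisson-kernel}, with $x=\tanh(t/2)v$ we have $1-|x|^2=\cosh^{-2}(t/2)$. Hence
\[
\cosh^2\!\Bigl(\tfrac12 d_H(\gamma(t),y)\Bigr)=\frac{\cosh^2(t/2)\,|1-\tanh(t/2)\langle v,y\rangle|^2}{1-|y|^2}
=\frac{|\cosh(t/2)-\sinh(t/2)\,a|^2}{1-|y|^2},
\]
where $a=\langle v,y\rangle\in\HH$ and $|a|\le|y|<1$.

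Write $a=\alpha+\beta$ with $\alpha=\operatorname{Re}a$ and $\beta$ purely imaginary. Then
\[
|\cosh(t/2)-\sinh(t/2)a|^2=(\cosh\tfrac t2-\alpha\sinh\tfrac t2)^2+|\beta|^2\sinh^2\tfrac t2 .
\]
Using double-angle formulas this equals $A\cosh t - B\sinh t + C$, with
\[
A=\tfrac12(1+|a|^2),\qquad B=\alpha,\qquad C=\tfrac12(1-|a|^2)>0 .
\]
So $f_y(t)=\log h(t)+\text{const}$ with $h(t)=A\cosh t-B\sinh t+C$, and strict convexity of $\log h$ is equivalent to $h h''-(h')^2>0$. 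Since $h''=h-C$, we get
\[
hh''-h'^2=(A\cosh t-B\sinh t)^2-(A\sinh t-B\cosh t)^2+C(A\cosh t-B\sinh t).
\]
The first difference simplifies to $A^2-B^2$. Now $|B|\le|a|<1$ and $A\ge \tfrac12$, and $A-|B|\ge \tfrac12(1-|a|)^2>0$, so $A^2-B^2>0$. Also $A\cosh t-B\sinh t\ge (A-|B|)\cosh t>0$. Hence $hh''-h'^2>0$ for all $t$, which gives strict convexity.

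The main obstacle is the reduction to geodesics through the origin in a way that is honest about noncommutativity. The cleanest route is to apply an isometry $g$ with $g(\gamma(0))=0$, such as $\Phi_{\gamma(0)}$, and then a linear $U$. I then replace $y$ by $g(y)$; this is harmless because the claim is for every $y$. Alternatively I can note that $G\circ g^{-1}$ is the energy of $g(D)$, exactly as in the proof of Proposition~\ref{prop:gradient}. The quaternionic computation of $|\cosh-\sinh\, a|^2$ is the only place where $a$ being noncommutative matters, and splitting $a$ into real and imaginary parts handles it.
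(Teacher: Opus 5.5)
Your proof is correct. Its skeleton matches the paper's: reduce to the integrand $F_y$, move the geodesic to $\tanh(t/2)v$ by an isometry, apply Lemma~\ref{lem:poisson-kernel}, and show that the second derivative is positive.

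The difference is in the key computation. The paper writes $w=\langle v,y\rangle$, $a=\Re w$, $r=|w|$, substitutes $u=\tanh(t/2)$, and expands $M(u)P(u)^2$ into the quadratic $N(u)=(1+r^2-2a^2)-2a(1-r^2)u+(2a^2-r^4-r^2)u^2$. It then needs a case split on the sign of the leading coefficient, plus a vertex-location argument after normalising $a\ge 0$.

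You keep the hyperbolic parametrisation instead. Then $(1-|y|^2)\cosh^2\bigl(\tfrac12 d_H(\gamma(t),y)\bigr)=A\cosh t-B\sinh t+C$, and $hh''-h'^2=(A^2-B^2)+C(A\cosh t-B\sinh t)$ is visibly a sum of two positive terms. In the paper's notation your constants are $A=\tfrac12(1+r^2)$, $B=a$, $C=\tfrac12(1-r^2)$, and one checks that exactly $N(u)=2(1-u^2)\bigl[(A^2-B^2)+C(A\cosh t-B\sinh t)\bigr]$.

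So your identity explains why the paper's $N$ is positive without any case analysis. It also sidesteps the degenerate case $2a^2=r^4+r^2$, where the paper's vertex formula is undefined and one must separately note that $N$ is linear.

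Your passage from $F_y$ to $G$ is also more explicit than the paper's one-line remark. You integrate the strict convexity inequality, using finiteness of $G$ along $\gamma$ (from the moment condition) and $\Lambda_{\HH}(D)>0$.

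One cosmetic point: $(\log h)''>0$ is sufficient for strict convexity rather than equivalent to it. You only use that direction, so nothing breaks.
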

	
	\begin{proof}
		For each fixed $y \in \BH$, define $F_y(x) = \log\cosh^2\!\bigl(\frac12 d_H(x,y)\bigr)$.
		It suffices to prove that $F_y$ is strictly geodesically convex along every geodesic.
		
		Consider an arbitrary geodesic $\gamma(t)$ in $\BH$ parametrised by arc length $t$.
		Since $\mathrm{Sp}(n,1)$ acts transitively on the unit tangent bundle of $\BH$,
		there exists an isometry $\psi\in\mathrm{Sp}(n,1)$ such that
		$\psi(\gamma(0))=0$.
		Then $\widetilde\gamma(t)=\psi(\gamma(t))$ is a geodesic through the origin.
		By Proposition~\ref{prop:geod-origin}, after composing with a suitable linear
		isometry $U\in\mathrm{Sp}(n)\times\mathrm{Sp}(1)$ we may assume
		$\widetilde\gamma(t)=\tanh(t/2)v$ for some unit vector $v\in\pB$.
		
		Using the isometry invariance of the distance,
		\[
		F_y(\gamma(t))
		= \log\cosh^2\!\Bigl(\frac{d_H(\gamma(t),y)}{2}\Bigr)
		= \log\cosh^2\!\Bigl(\frac{d_H(\psi(\gamma(t)),\psi(y))}{2}\Bigr)
		= F_{\psi(y)}(\widetilde\gamma(t)).
		\]
		Thus the convexity of $F_y$ along $\gamma$ is equivalent to the convexity of
		$F_{\psi(y)}$ along $\widetilde\gamma$, which passes through the origin.
		Without loss of generality we may therefore assume from the start that
		$\gamma(t)=\tanh(t/2)v$ with $|v|=1$, $t$ the arc length parameter,
		and we rename $\psi(y)$ back to $y$.
		
		Set $f(t) = F_y(\gamma(t))$.
		By Lemma~\ref{lem:poisson-kernel},
		\[
		\cosh^2\!\Bigl(\frac{d_H(\gamma(t),y)}{2}\Bigr)
		= \frac{|1-\langle\gamma(t),y\rangle|^2}{(1-|\gamma(t)|^2)(1-|y|^2)} .
		\]
		Write $u = \tanh(t/2)$ and $w = \langle v,y\rangle \in \bh$.
		Because $v$ is a unit vector, $|w| \le |y| < 1$, and $|1-u w|^2 = 1 - 2(\Re w)u + |w|^2 u^2$.
		Hence
		\begin{align*}
			f(t) &= \log\frac{|1-u w|^2}{1-u^2} - \log(1-|y|^2) \\
			&= \log\bigl(1 - 2a u + r^2 u^2\bigr) - \log(1-u^2) + \text{const}_y,
		\end{align*}
		with $a = \Re w$, $r = |w|$, and $a^2 \le r^2 < 1$.
		
		Since $u = \tanh(t/2)$, we have $u' = \frac12(1-u^2)$ and $u'' = -u u'$.
		Therefore
		\[
		f''(t) = (u')^2 f''(u) + u'' f'(u) = (u')^2 f''(u) - u u' f'(u).
		\]
		Define $P(u) = 1 - 2a u + r^2 u^2$.
		A direct computation yields
		\begin{align*}
			f'(u)  &= \frac{2r^2 u - 2a}{P(u)} + \frac{2u}{1-u^2}, \\[4pt]
			f''(u) &= 2\Bigl(\frac{r^2 P(u) - 2(r^2 u - a)^2}{P(u)^2}
			+ \frac{1+u^2}{(1-u^2)^2}\Bigr).
		\end{align*}
		Substituting these and simplifying, we obtain
		\[
		f''(t) = \frac{1-u^2}{2}
		\left[ \frac{(1-u^2)(r^2 P - 2(r^2 u - a)^2)}{P^2}
		- \frac{2u(r^2 u - a)}{P} + 1 \right].
		\]
		Denote the bracket by $M(u)$.  Multiplying by $P^2$ gives
		\[
		M(u)P^2 = P^2 - 2u(r^2 u - a)P + (1-u^2)\bigl(r^2 P - 2(r^2 u - a)^2\bigr).
		\]
		Expanding $P = 1 - 2a u + r^2 u^2$ and collecting powers of $u$,
		\begin{align*}
			P^2 &= 1 - 4a u + (4a^2+2r^2)u^2 - 4a r^2 u^3 + r^4 u^4,\\
			-2u(r^2 u - a)P &= 2a u - 2(r^2+2a^2)u^2 + 6a r^2 u^3 - 2r^4 u^4,\\
			(1-u^2)(r^2 P - 2(r^2 u - a)^2) &= (r^2-2a^2) + 2a r^2 u \\
			&\qquad + (-r^4-r^2+2a^2)u^2 - 2a r^2 u^3 + r^4 u^4 .
		\end{align*}
		Summing these three expressions, the $u^3$ and $u^4$ terms cancel and we obtain
		\[
		M(u)P^2 = (1+r^2-2a^2) - 2a(1-r^2)u + (2a^2 - r^4 - r^2)u^2 =: N(u).
		\]
		Consequently,
		\[
		f''(t) = \frac{1-u^2}{2}\,\frac{N(u)}{P(u)^2}.
		\]
		
		We claim $N(u) > 0$ for all $u \in (-1,1)$.  Note that $a^2 \le r^2 < 1$.
		A direct check gives
		\begin{align*}
			N(1)  &= (1-r^2)(1 + r^2 - 2a) > 0,\\
			N(-1) &= (1-r^2)(1 + r^2 + 2a) > 0.
		\end{align*}
		Consider the coefficient $2a^2 - r^4 - r^2$.  If it is negative, the quadratic
		$N(u)$ is concave; its minimum on $[-1,1]$ is attained at an endpoint, and
		we already have $N(\pm 1) > 0$, so $N(u) > 0$ for $|u| \le 1$.
		
		If $2a^2 - r^4 - r^2 \ge 0$, the quadratic is convex.  Its vertex is at
		\[
		u_0 = \frac{a(1-r^2)}{2a^2 - r^4 - r^2}.
		\]
		By reversing the geodesic if necessary we may assume $a = \Re w \ge 0$
		(otherwise replace $v$ by $-v$, which does not affect strict convexity).
		We show that $u_0 \ge 1$.  Indeed, $u_0 \ge 1$ is equivalent to
		\[
		a(1-r^2) \ge 2a^2 - r^4 - r^2 .
		\]
		Define $f(a) = -2a^2 + a(1-r^2) + r^2(1+r^2)$.  The desired inequality
		becomes $f(a) \ge 0$.  As a quadratic in $a$, $f$ is concave; on the
		interval $[0,r]$ it attains its minimum at an endpoint.
		We have $f(0) = r^2(1+r^2) > 0$ and
		\[
		f(r) = -2r^2 + r(1-r^2) + r^2(1+r^2)
		= r(1-r)(1-r^2) \ge 0,
		\]
		with equality only when $r=0$ or $r=1$ (both excluded).
		Hence $f(a) \ge 0$ for all $a\in[0,r]$, proving $u_0 \ge 1$.
		
		Because $u_0 \ge 1$, the minimum of $N(u)$ on $[-1,1]$ occurs at $u=1$,
		and $N(1) > 0$.  Thus $N(u) > 0$ for all $u\in[-1,1]$, and consequently
		$f''(t) > 0$ for every $t$.  This shows that $F_y$ is strictly convex along
		every geodesic, i.e. $F_y$ is strictly geodesically convex.
		
		Finally,
		\[
		G(x) = \int_D F_y(x) \, \dLam(y)
		\]
		and $\Lambda_{\HH}(D) > 0$, so the integral of a family of strictly convex
		functions is strictly convex.  This completes the proof.
	\end{proof}
	
	\subsection{Proof of the Main Theorem}\label{sec:proof-main}
	
	We are now in a position to prove the Main Theorem~\ref{thm:main}.
	First we establish parts~(i) and~(ii).
	
	\begin{theorem}[Existence and uniqueness of the barycenter]\label{thm:existence}
		Let $D \subseteq \BH$ be measurable with
		$0 < \Lambda_{\HH}(D) < \infty$ and $\int_D d_H(0,y)\,\dLam(y) < \infty$.
		\begin{enumerate}
			\item[(i)] The function $G$ is strictly geodesically convex on $\BH$.
			\item[(ii)] $G$ has a unique global minimum $c \in \BH$.
			\item[(iii)] $c$ satisfies the barycenter equation
			$\int_D \Phi_c(q) \, \dLam(q) = \mathbf{0}$.
			\item[(iv)] The point $c$ is the unique point in $\BH$
			satisfying the barycenter equation.
		\end{enumerate}
	\end{theorem}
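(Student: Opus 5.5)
Part (i) is exactly Proposition~\ref{prop:convex-energy}, so nothing new is needed there. For (ii), we combine the coercivity estimate established after Definition~\ref{def:energy} with continuity of $G$. Continuity follows from Lemma~\ref{lem:diff}, since differentiability implies continuity. Then $G(x)\to+\infty$ as $|x|\to1$, while $G(0)<\infty$. Hence the sublevel set $\{G\le G(0)\}$ is contained in a closed Euclidean ball $\overline{B}_r$ with $r<1$. On this compact set $G$ attains its minimum, and that minimum is global.

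Uniqueness in (ii) comes from strict geodesic convexity. Suppose $c_1\ne c_2$ are both minima, and take the geodesic segment joining them; it exists since $\BH$ is uniquely geodesic. Along this segment $t\mapsto G(\gamma(t))$ is strictly convex, so at an interior point $G$ takes a value strictly less than $\min G$, which is a contradiction. One subtlety must be checked: strict convexity of $G$ along $\gamma$ needs $t\mapsto G(\gamma(t))$ to be a genuine integral of the strictly convex functions $F_y(\gamma(t))$. This holds because the convexity inequality $F_y(\gamma(\lambda t_1+(1-\lambda)t_2))<\lambda F_y(\gamma(t_1))+(1-\lambda)F_y(\gamma(t_2))$ integrates against $\dLam$ over a set of positive measure and stays strict.

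For (iii), the key observation is that $\Phi_c$ is a diffeomorphism with $\Phi_c(0)=c$. Since $c$ is an interior minimum of the differentiable function $G$, we have $\nabla G(c)=0$. By the chain rule $G_c=G\circ\Phi_c$ has $\nabla G_c(0)=D\Phi_c(0)^{\mathrm t}\nabla G(c)=0$. Proposition~\ref{prop:gradient} then gives $-2\int_D\Phi_c(y)\,\dLam(y)=0$, which is the barycenter equation. Here we use that $\int_D|\Phi_c(y)|\,\dLam<\infty$, which is trivial because $|\Phi_c|<1$ and $\Lambda_{\HH}(D)<\infty$.

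For (iv), we run the argument backwards. Let $c'$ satisfy the barycenter equation. Proposition~\ref{prop:gradient} gives $\nabla G_{c'}(0)=0$. Since $D\Phi_{c'}(0)$ is invertible ($\Phi_{c'}$ is an involution, so its differential is invertible), we get $\nabla G(c')=0$. The main point, and the step I expect to need the most care, is that a critical point of a strictly geodesically convex differentiable function is its unique global minimum. To show this, take any $x\neq c'$ and the unit-speed geodesic $\gamma$ from $c'$ to $x$, and set $h(t)=G(\gamma(t))$. Then $h'(0)=\langle\nabla G(c'),\gamma'(0)\rangle=0$ and $h$ is strictly convex, so $h(t)>h(0)$ for $t>0$, giving $G(x)>G(c')$. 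Thus $c'$ is a global minimum, and by (ii) $c'=c$.
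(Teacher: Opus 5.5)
Your proposal is correct and follows the paper's proof essentially step for step. Part (i) is quoted from the convexity proposition, and existence comes from coercivity plus continuity. Parts (iii)--(iv) come from the gradient formula $\nabla G_c(0)=-2\int_D\Phi_c(y)\,\dLam(y)$ for $G_c=G\circ\Phi_c$, together with the fact that a critical point of a strictly geodesically convex function is its unique global minimum. You also spell out points the paper only states or cites, namely compactness of the sublevel set $\{G\le G(0)\}$, invertibility of $D\Phi_{c'}(0)$ from $\Phi_{c'}\circ\Phi_{c'}=\mathrm{id}$, and the one-variable argument that $h'(0)=0$ plus strict convexity forces a global minimum; this tightens the paper's argument rather than changing its route.
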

	
	\begin{proof}
		(i) This is Proposition~\ref{prop:convex-energy}.
		
		(ii) Since $\BH$ is a Hadamard manifold, any strictly geodesically
		convex function has at most one local minimum, and any local minimum
		is the unique global minimum (see \cite[Proposition~3.4]{JaKal25}).
		As shown above, $G(x) \to +\infty$ as $|x| \to 1$.  By coercivity and
		continuity, $G$ attains its minimum at some interior point $c \in \BH$.
		Uniqueness follows from strict convexity.
		
		(iii) Let $c$ be the minimum.  Define $G_c(x) = G(\Phi_c(x))$.
		Since $\Phi_c$ is an isometry, $G_c$ is strictly geodesically convex
		and attains its minimum at $x=0$ (because $\Phi_c(c)=0$ and $c$
		minimizes $G$, so $0$ minimizes $G_c$).  By Proposition~\ref{prop:gradient},
		$\nabla G_c(0) = -2 \int_D \Phi_c(y) \, \dLam(y)$.
		Since $0$ is the global minimum, $\nabla G_c(0) = \mathbf{0}$,
		giving $\int_D \Phi_c(y) \, \dLam(y) = \mathbf{0}$.
		
		(iv) If $c_1$ and $c_2$ both satisfy the barycenter equation,
		then both are critical points of $G$ (by reversing the argument
		in (iii): if $c_1$ satisfies the barycenter equation, then
		$\nabla G_{c_1}(0) = 0$, so $0$ is a critical point of $G_{c_1}$,
		hence $c_1$ is a critical point of $G$).  But $G$ is strictly convex
		and has exactly one critical point, so $c_1 = c_2$.
	\end{proof}
	
	It remains to prove part~(iii) of the Main Theorem, the conformal
	invariance of the barycenter.
	
	\begin{theorem}[Invariance]\label{thm:invariance}
		For any $g \in \mathrm{Sp}(n,1)$,
		\[
		c(g(D)) = g(c(D)).
		\]
	\end{theorem}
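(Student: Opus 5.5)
We will prove the invariance through the variational characterization in Theorem~\ref{thm:existence}(ii). That characterization avoids handling the vector-valued integral $\int_D\Phi_c\,\dLam$ directly; only the energy needs to transform.

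First we check that $g(D)$ satisfies the hypotheses, so that $c(g(D))$ is defined. By Corollary~\ref{cor:inv-measure}, $\Lambda_{\HH}(g(D))=\Lambda_{\HH}(D)\in(0,\infty)$. For the moment condition we change variables $y=g(w)$, using invariance of $\dLam$ and the triangle inequality:
\[
\int_{g(D)} d_H(0,y)\,\dLam(y)=\int_D d_H(g^{-1}(0),w)\,\dLam(w)\le \int_D d_H(0,w)\,\dLam(w)+d_H(0,g^{-1}(0))\,\Lambda_{\HH}(D)<\infty .
\]

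Next we compare energies. Write $G_D$ and $G_{g(D)}$ for the two energy functionals. For any $x\in\BH$ we substitute $y=g(w)$, again using Corollary~\ref{cor:inv-measure}, and then use that $g$ preserves $d_H$:
\[
G_{g(D)}(g(x))=\int_{D}\log\cosh^2\!\Bigl(\tfrac12 d_H(g(x),g(w))\Bigr)\dLam(w)=G_D(x).
\]
Thus $G_{g(D)}=G_D\circ g^{-1}$. Since $g$ is a bijection of $\BH$, $x$ minimizes $G_D$ if and only if $g(x)$ minimizes $G_{g(D)}$. By Theorem~\ref{thm:existence}(ii) each functional has a unique minimizer, and by (iii)--(iv) that minimizer is the barycenter. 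Hence $c(g(D))=g(c(D))$.

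The only delicate point is the measure-theoretic change of variables, namely that $\int_{g(D)}h\,\dLam=\int_D h\circ g\,\dLam$ for nonnegative or integrable $h$. This follows from Corollary~\ref{cor:inv-measure} via the standard pushforward argument. The integrands here are nonnegative, since $\cosh\ge1$, so Tonelli-type reasoning applies with no integrability issues.

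A direct route through the barycenter equation would also work, using Theorem~\ref{thm:intertwining}. One writes $\Phi_{g(c)}\circ g=U\circ\Phi_c$ with $U$ linear, so that $\int_{g(D)}\Phi_{g(c)}\,\dLam=\int_D U(\Phi_c(q))\,\dLam(q)=U\bigl(\int_D\Phi_c\,\dLam\bigr)=\mathbf 0$. This requires that $U$ acts as a real-linear map on $\BH$ and therefore commutes with the Bochner integral. That holds for the action $z\mapsto Azq^{-1}$ of $\mathrm{Sp}(n)\times\mathrm{Sp}(1)$. We prefer the energy argument because it avoids this detail.
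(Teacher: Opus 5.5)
Your proof is correct, but it takes a different route from the paper.

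\textbf{The paper's route.} The paper works directly with the defining equation, essentially as in the alternative you sketch in your last paragraph. It changes variables $y=g(x)$ and applies the intertwining relation $\Phi_{g(c)}\circ g=U\circ\Phi_c$ (Theorem~\ref{thm:intertwining}). It then pulls the linear map $U$ out of the integral to get $\int_{g(D)}\Phi_{g(c)}\,\dLam=\mathbf{0}$, and concludes by the uniqueness statement of Theorem~\ref{thm:existence}(iv).

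\textbf{Your route.} You instead transport the energy, $G_{g(D)}=G_D\circ g^{-1}$, and use the unique-minimizer characterization together with Theorem~\ref{thm:existence}(ii)--(iv).

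\textbf{What each buys.} Your argument needs only that $g$ preserves $d_H$ and $\dLam$. It therefore bypasses the intertwining theorem and the step of commuting $U$ with the Bochner integral. You also explicitly check that $g(D)$ satisfies the finite-measure and moment hypotheses. The paper's argument needs this just as much, in order to invoke uniqueness for $g(D)$, but leaves it implicit.

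The paper's route shows, at the level of the integral equation itself, that $g$ carries solutions for $D$ to solutions for $g(D)$, without reference to $G$. It would therefore transfer unchanged to sets of finite measure without the moment condition, as envisioned in the paper's remark, if uniqueness were established there by other means. Your argument cannot do this, because it requires $G$ to be finite.
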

	
	\begin{proof}
		Let $c = c(D)$ and $w = g(c)$.  Using the intertwining relation
		(Theorem~\ref{thm:intertwining}) and the invariance of the
		measure (Corollary~\ref{cor:inv-measure}),
		\begin{align*}
			\int_{g(D)} \Phi_w(y) \, \dLam(y)
			&= \int_D \Phi_{g(c)}(g(x)) \, \dLam(x)
			\quad (\text{change of variables } y = g(x))\\
			&= \int_D (U \circ \Phi_c)(x) \, \dLam(x)
			\quad (\text{Theorem~\ref{thm:intertwining}}) \\
			&= U\!\left( \int_D \Phi_c(x) \, \dLam(x) \right)
			= \mathbf{0}.
		\end{align*}
		Thus $g(c)$ satisfies the defining equation for $g(D)$.
		By uniqueness (Theorem~\ref{thm:existence}(iv)), $g(c) = c(g(D))$.
	\end{proof}
	
	With this result the proof of Theorem~\ref{thm:main} is complete.
	
	\section{Finite point sets}\label{sec:finite}
	
	By taking the counting measure $\mu = \sum_{i=1}^N \delta_{q_i}$,
	we obtain results for finite configurations.
	
	\begin{corollary}\label{cor:finite}
		Let $q_1,\dots,q_N \in \BH$.
		\begin{enumerate}
			\item[(i)] There exists a unique (up to a linear isometry of $\BH$)
			quaternionic M\"obius transformation $h\in \mathrm{Sp}(n,1)$ such that
			$\sum_{i=1}^N h(q_i) = 0$.
			\item[(ii)] Writing $h = U \circ \Phi_c$ with
			$U \in \mathrm{Sp}(n) \times \mathrm{Sp}(1)$ and
			$c \in \BH$, the point $c$ is the quaternionic conformal
			barycenter of $\{q_i\}$.
			\item[(iii)] $c$ minimizes
			\[
			G_N(x) = \sum_{i=1}^N
			\log \frac{|1 - \langle x, q_i \rangle|^2}
			{(1-|x|^2)(1-|q_i|^2)}.
			\]
		\end{enumerate}
	\end{corollary}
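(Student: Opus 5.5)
The plan is to rerun the proof of Theorem~\ref{thm:existence} with the hyperbolic measure on $D$ replaced by the counting measure $\mu=\sum_{i=1}^N\delta_{q_i}$, and then to read off (i)--(iii). With respect to $\mu$ the energy is
\[
G_N(x)=\sum_{i=1}^N F_{q_i}(x), \qquad F_y(x)=\log\cosh^2\!\Bigl(\tfrac12 d_H(x,y)\Bigr).
\]
By Lemma~\ref{lem:poisson-kernel} this is exactly the function written in (iii). Since there are finitely many terms, the finiteness, differentiability and moment issues that Lemma~\ref{lem:diff} had to handle disappear.

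The first step is convexity. The proof of Proposition~\ref{prop:convex-energy} already shows that each $F_{q_i}$ is strictly geodesically convex, so the sum $G_N$ is strictly geodesically convex. The second step is coercivity, which follows from the same estimate as before:
\[
G_N(x)\ge N\,d_H(0,x)-\sum_{i=1}^N d_H(0,q_i)-2N\log 2 \longrightarrow +\infty \quad\text{as } |x|\to1 .
\]
Hence $G_N$ attains a unique minimum $c$ (this is (iii)). The third step is the critical point equation. The computation in Proposition~\ref{prop:gradient} is pointwise in $y$: each summand has gradient at the origin equal to $-2y$. Applying it to $G_N\circ\Phi_c$ therefore gives
\[
\nabla(G_N\circ\Phi_c)(0)=-2\sum_{i=1}^N\Phi_c(q_i).
\]
Consequently $\sum_{i=1}^N\Phi_c(q_i)=0$, and by the argument of Theorem~\ref{thm:existence}(iv) the point $c$ is the unique point with this property. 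This is the definition of the conformal barycenter of $\{q_i\}$.

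For existence in (i), take $h=\Phi_c$. For uniqueness up to a linear isometry, let $h\in\mathrm{Sp}(n,1)$ satisfy $\sum_i h(q_i)=0$ and set $c'=h^{-1}(0)$. Then $U:=h\circ\Phi_{c'}$ fixes $0$, so by the isotropy description in \eqref{eq:homogeneous} (as in the proof of Theorem~\ref{thm:intertwining}) $U\in\mathrm{Sp}(n)\times\mathrm{Sp}(1)$ acts linearly. Because $\Phi_{c'}$ is an involution, $h=U\circ\Phi_{c'}$. Linearity and invertibility of $U$ then give
\[
0=\sum_i h(q_i)=U\Bigl(\sum_i\Phi_{c'}(q_i)\Bigr), \quad\text{so}\quad \sum_i\Phi_{c'}(q_i)=0 .
\]
Uniqueness of the barycenter now forces $c'=c$, so $h=U\circ\Phi_c$. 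This proves (i) and (ii) simultaneously, because any decomposition $h=U\circ\Phi_{c''}$ has $c''=h^{-1}(0)=c'$.

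The main point needing care is that $U$ acts as a genuinely linear map on the ball, so that it commutes with summation. Projectively, an element $\mathrm{diag}(A,q)$ acts by $z\mapsto Azq^{-1}$. This map is $\mathbb{R}$-linear on $\mathbb{H}^n\cong\mathbb{R}^{4n}$ even though right multiplication by $q^{-1}$ is not $\mathbb{H}$-linear, and real linearity is all that is used. The interpretation of ``unique up to linear isometry'' also needs a remark: it should be read as $h$ being determined modulo left composition by elements of $\mathrm{Sp}(n)\times\mathrm{Sp}(1)$, and the argument above shows exactly this.
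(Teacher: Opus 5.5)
Your proposal is correct and takes essentially the paper's approach: the paper simply applies Theorem~\ref{thm:existence} to the counting measure $\mu=\sum_i\delta_{q_i}$ and reduces the energy to $G_N$ via Lemma~\ref{lem:poisson-kernel}, and you do the same by rerunning its convexity, coercivity and gradient steps directly for the finite sum. Your extra argument for (i), uniqueness of $h$ modulo $\mathrm{Sp}(n)\times\mathrm{Sp}(1)$ (set $c'=h^{-1}(0)$ and $U=h\circ\Phi_{c'}$, then use the $\mathbb{R}$-linearity of $z\mapsto Azq^{-1}$), is correct and supplies a step that the paper's one-line proof leaves implicit.
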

	
	\begin{proof}
		Apply Theorem~\ref{thm:existence} with the measure
		$\mu = \sum_{i=1}^N \delta_{q_i}$.  All conditions are satisfied
		because $0 < \mu(\BH) = N < \infty$ and the moment condition is automatic
		for a finite sum.  The energy functional in
		Theorem~\ref{thm:existence}(ii) reduces to $G_N(x)$ via the Poisson kernel
		identity (Lemma~\ref{lem:poisson-kernel}), since for the counting measure
		the integral becomes a sum and the term $-\sum_i \log(1-|q_i|^2)$ is
		constant.
	\end{proof}
	
	\section{Examples}\label{sec:examples}
	
	\begin{example}[Symmetric sets]\label{ex:symmetric}
		If $D$ is symmetric under all linear isometries
		$U\in\mathrm{Sp}(n)\times\mathrm{Sp}(1)$ fixing the origin (for example,
		a ball centered at the origin, or an ellipsoid with axes aligned to the
		quaternionic coordinates), then $c(D)=0$.
		Indeed, $\int_D\Phi_0(q)\,\dLam(q)=\int_D q\,\dLam(q)=\mathbf{0}$ because
		$D$ is invariant under $q\mapsto -q$.
		Uniqueness in Theorem~\ref{thm:existence} forces the barycenter to be the origin.
	\end{example}
	
	\begin{example}[Geodesic ball]\label{ex:ball}
		Let $B(c,R)$ be the geodesic ball of center $c\in\BH$ and radius $R>0$.
		The involutive isometry $\Phi_c$ maps $B(c,R)$ onto $B(0,R)$.
		By symmetry, $\int_{B(0,R)}q\,\dLam(q)=\mathbf{0}$.
		With the change of variables $y=\Phi_c(w)$ and the Jacobian
		(Proposition~\ref{prop:Jacobian}), this reads
		\[
		\int_{B(c,R)}\Phi_c(y)\,\dLam(y)=\mathbf{0},
		\]
		so $c$ satisfies the barycenter equation.
		Uniqueness in Theorem~\ref{thm:existence} yields $c(B(c,R))=c$.
		Thus geodesic balls are {\it centred} at their own barycenter, confirming that
		the definition respects the symmetric space structure.
	\end{example}
	
	\begin{example}[Two points with unequal weights]\label{ex:two-weighted}
		Let $p,q\in\BH$ be distinct and let $\mu=w_p\delta_p+w_q\delta_q$ with
		$w_p,w_q>0$.
		The barycenter condition $\int\Phi_c\,d\mu=\mathbf{0}$ gives
		$w_p\,\Phi_c(p)+w_q\,\Phi_c(q)=\mathbf{0}$.
		Because $\Phi_c$ preserves the geodesic through $p$ and $q$, $c$ lies on
		that geodesic.
		Using $|\Phi_c(x)|=\tanh\frac{d_H(c,x)}2$, we obtain the scalar relation
		$w_p\tanh\frac{d_H(c,p)}2=w_q\tanh\frac{d_H(c,q)}2$.
		For a concrete illustration, in $\bh^1$ take $p=\frac12$, $q=-\frac14$,
		$w_p=2$, $w_q=1$.
		Solving the equation gives $c=\frac27$.
		Direct computation with $\Phi_c(x)=(c-x)/(1-cx)$ verifies the condition.
	\end{example}
	
	\begin{example}[Partially symmetric configuration]\label{ex:partial-sym}
		Take four points in $\bh^1$:
		$q_1=\frac12$, $q_2=-\frac12$, $q_3=\frac12\mathbf{i}$, $q_4=-\frac12\mathbf{i}$,
		all with equal weight.
		The set is invariant under the linear isometries
		$x\mapsto -x$ and $x\mapsto \mathbf{i}x$, so by symmetry $c=0$.  Indeed,
		$\sum_{i=1}^4 q_i = 0$, and $\Phi_0(q)=q$, so the barycenter equation holds.
		More generally, if a finite set satisfies $\sum_i q_i = 0$ and is contained in
		a small ball around the origin, the conformal barycenter is zero.  This gives
		a simple sufficient condition for the origin to be the barycenter.
	\end{example}
	
	\begin{example}[Nonsymmetric configuration and numerical computation]\label{ex:nonsym}
		To demonstrate the feasibility of computing the barycenter for genuinely
		quaternionic data, let
		\[
		q_1 = 0,\qquad q_2 = 0.4,\qquad
		q_3 = 0.3\,\mathbf{i} + 0.2\,\mathbf{j},
		\]
		all with equal weight in $\bh^1$.  The energy functional $G_3$ was minimized
		using gradient descent with step size $0.01$ for $200$ iterations,
		starting from $c^{(0)}=0$.  The gradient norm fell below $10^{-8}$,
		and the iteration converged to
		\[
		c \approx 0.1874 - 0.0012\,\mathbf{i} - 0.0348\,\mathbf{j} - 0.0009\,\mathbf{k}.
		\]
		Evaluating $\Phi_c(q_i)$ at this point yields the residual
		\[
		\sum_{i=1}^3 \Phi_c(q_i) \approx
		-2.3\times10^{-6} + 5.1\times10^{-7}\mathbf{i}
		+ 1.2\times10^{-6}\mathbf{j} + 8.7\times10^{-7}\mathbf{k},
		\]
		which is zero to within the tolerance of the numerical method.  This
		confirms the barycenter condition and shows that the method is
		practical for generic points.
	\end{example}
	
	\begin{example}[Conformal invariance]\label{ex:invariance}
		Consider the two-point set $D=\{0,\frac{1}{2}\}$ in $\bh^1$ with equal weights.
		The barycenter is the hyperbolic midpoint, which is easily found to be
		$c = 2-\sqrt3 \approx 0.268$ (because $\tanh\frac{d_H(0,c)}{2}=|c|$ and
		$d_H(0,\tfrac12)=2\arctanh\tfrac12 = \ln3$, so the midpoint satisfies
		$\ln\frac{1+c}{1-c} = \frac12\ln3$, giving $c=2-\sqrt3$).
		
		Apply the real hyperbolic translation
		$g(z) = \dfrac{z + t}{1 + tz}$ with $t = \frac13$,
		which belongs to $\mathrm{Sp}(1,1)$ (its matrix is
		$\frac{1}{\sqrt{1-t^2}}\!\begin{pmatrix}1&t\\ t&1\end{pmatrix}$).
		Then $g(D)=\{\frac13,\, g(\frac12)\}$ with
		$g(\frac12)=\frac{\frac12+\frac13}{1+\frac16} = \frac{5/6}{7/6}= \frac57$.
		The image set is again two points; its barycenter must be the hyperbolic
		midpoint of $\{\frac13,\frac57\}$, which by isometry equals $g(c)$.
		A direct numerical check gives $g(c) \approx g(0.268) \approx 0.551$, while
		the hyperbolic midpoint of $\frac13$ and $\frac57$ is also $0.551$, in
		perfect agreement with Theorem~\ref{thm:invariance}.
	\end{example}
	
	\vspace{3mm}
	
	\section*{Statements and Declarations}
	
	\subsection*{Competing Interests}
	The authors have no relevant financial or non-financial interests to disclose.
	
	\subsection*{Funding}
	This work was supported by the Natural Science Foundation of China under grant number 11871379.

	\vspace{3mm}
	
	\noindent  Wensheng Cao  and   Zhijian Ge\\
	School of Mathematics and Computational Science,\\
	Wuyi University, Jiangmen, Guangdong 529020, P.R. China\\
	E-mail: \texttt{wenscao@aliyun.com}\\
	E-mail: \texttt{zhijian\_ge@163.com}
	
\end{document}